\definecolor{mycolor}{RGB}{153, 0, 0}
\title{Property Directed Reachability with Extended Resolution\\(to appear in Computer Aided Verification 2025)}
\author{Andrew Luka\orcidlink{0009-0007-9369-859X} \and Yakir Vizel\orcidlink{0000-0002-5655-1667}}
\institute{Computer Science Department, Technion, Haifa, Israel}
\begin{document}

\maketitle


\begin{abstract}

Property Directed Reachability (\Pdr), also known as IC3, is a state-of-the-art model checking algorithm widely used for verifying safety properties. While \Pdr is effective in finding inductive invariants, its underlying proof system, Resolution, limits its ability to construct short proofs for certain verification problems. 

This paper introduces \PdrEV, a novel generalization of \Pdr that uses Extended Resolution (ER), a proof system exponentially stronger than Resolution, when constructing a proof of correctness. \PdrEV leverages ER to construct shorter bounded proofs of correctness, enabling it to discover more compact inductive invariants. While \PdrEV is based on \Pdr, it includes algorithmic enhancements that had to be made in order to efficiently use ER in the context of model checking.

We implemented \PdrEV in a new open-source verification framework and evaluated it on the Hardware Model Checking Competition benchmarks from 2019, 2020 and 2024. Our experimental evaluation demonstrates that \PdrEV outperforms \Pdr, solving more instances in less time and uniquely solving problems that \Pdr cannot solve within a given time limit. We argue that this paper represents a significant step toward making strong proof systems practically usable in model checking.


\end{abstract}

\section{Introduction}

    Property Directed Reachability (\Pdr), a.k.a. IC3~\cite{DBLP:conf/vmcai/Bradley11,DBLP:conf/fmcad/EenMB11}, is a state-of-the-art model checking algorithm, which is widely used by many verification tools. Given a model and a safety property to be verified, \Pdr is especially effective in finding an inductive invariant that establishes the correctness of that property w.r.t. the given model. 
    Intuitively, one can view \Pdr as a proof-search algorithm, which searches for a proof, i.e. an inductive invariant, in order to establish the validity of the property.
    It is therefore natural to consider the underlying proof-system \Pdr uses during this process.

    Proof-systems are a central concept in the theory of Computer Science and are a powerful tool when applying logic in automated reasoning. A proof-system $P$ is a set of axioms and inference rules such that a derivation of a statement in $P$ is a proof for the validity of that statement under the given set of axioms. A well-known proof-system is Resolution~\cite{dp,DBLP:books/lib/Knuth97}, which forms the theoretical foundation of Boolean Satisfiability (SAT)~\cite{dp,dpll,DBLP:conf/iccad/SilvaS96,DBLP:conf/dac/MoskewiczMZZM01}. In fact, Resolution is the underlying proof-system in many SAT-based model checking algorithms~\cite{DBLP:conf/cav/McMillan03,DBLP:conf/cade/Clarke03,DBLP:conf/fmcad/VizelG09,DBLP:conf/cav/VizelG14,DBLP:journals/pieee/VizelWM15}, including \Pdr~\cite{DBLP:conf/fmcad/BaylessVBHH13}.

    In \Pdr, during the search for an inductive invariant the algorithm constructs a bounded proof of correctness, in Conjunctive Normal Form (CNF), that proves that the property holds up to a given bound. In fact, this bounded proof can be simulated by a Resolution proof~\cite{DBLP:conf/fmcad/BaylessVBHH13}. More precisely, for a bounded proof of correctness $\vF$ constructed by \Pdr, there exists a Resolution proof polynomial in the size of $\vF$. By that, the bounded proofs \Pdr constructs are limited by the strength of the Resolution proof-system.
    
    The strength of a proof-system $P$ is determined by the length of the shortest proof it admits for a valid statement. The shorter the proofs in $P$, the stronger $P$ is. In the case of Resolution, there exist formulas for which Resolution admits no ``short'' proofs, i.e. it admits no proof polynomial in the size of the formula. A notable example is the pigeonhole problem, which admits only exponential size Resolution proofs~\cite{DBLP:journals/tcs/Haken85}. Since the bounded proof of correctness \Pdr constructs can be simulated by Resolution, this means that there exist models and properties, for which \Pdr admits no short bounded proof of correctness.
    This fact limits the performance of \Pdr. Moreover, it implies that there exist models and properties that \Pdr necessarily cannot solve in a reasonable time.

    In this paper we present a novel model checking algorithm, \PdrEV, which generalizes \Pdr with Extended Resolution (ER)~\cite{Tseitin1983OnTC} as its underlying proof-system. ER is a strong proof-system 
    that is \emph{exponentially stronger} than Resolution. For example, there are polynomial size ER proofs for the pigeonhole problem~\cite{DBLP:journals/sigact/Cook76}. Hence, if \PdrEV can efficiently utilize ER, it has the potential to construct shorter bounded and unbounded proofs of correctness.

    ER generalizes Resolution with the addition of \emph{the extension rule}. It enables the addition of an auxiliary variable along with its definition to the formula. For example, given a formula $\varphi(x_1,\ldots,x_n)$, applying the extension rule may add $y \liff (x_1\lor x_2)$ to $\varphi$, where $y$ is a fresh variable. This results in the formula $\psi(x_1,\ldots,x_n,y) := \varphi(x_1,\ldots,x_n)\land (y \leftrightarrow (x_1\lor x_2))$. 

    There is a trade-off between the strength of a proof system and the ability to implement an efficient proof-search algorithm for it. Hence, while ER is exponentially stronger than Resolution, the addition of the extension rule makes proof-search algorithms for ER intractable, since there is no guidance as to which auxiliary variables and definitions should be added. 

    The complexity in implementing an efficient proof-search algorithm for ER also manifests itself in \PdrEV. Therefore, the following three key aspects must be addressed for \PdrEV to be efficient:
    \begin{enumerate*}[label=(\roman*)]
        \item the extension rule should add the ``right'' auxiliary variables and definitions; 
        \item applying the extension rule should be efficient; and
        \item the model checking algorithm needs to be adjusted to account for the added auxiliary variables.
    \end{enumerate*}
    The first two points are also true in the context of SAT. Hence, we draw our intuition from techniques that utilize ER in \satsolvers, which had some limited success~\cite{DBLP:conf/aaai/AudemardKS10,DBLP:conf/hvc/MantheyHB12}. \PdrEV uses a pre-defined set of templates that guides the addition of auxiliary variables. The templates are used to match clauses from the bounded proof maintained by \Pdr. If a set of clauses in the proof matches a given template, an auxiliary variable is added and the matched clauses are replaced with a new \emph{smaller} set of clauses.
    For example, assume the proof consists of the following clauses $(l_1\lor l_2\lor A)$ and $(\neg l_1\lor \neg l_2\lor A)$. When these clauses are identified, \PdrEV can add a fresh auxiliary variable $y$ such that $y \leftrightarrow (l_1 \lxor l_2)$, and replace the above two clauses with $(y\lor A)$, effectively re-encoding the bounded proof of correctness with the addition of an auxiliary variable. Intuitively, this is somewhat similar to ``ghost code'' addition~\cite{DBLP:conf/vmcai/ChangL05,DBLP:conf/pldi/HalbwachsP08}, only that in this case it is done \emph{automatically}.

    Using this mechanism in \PdrEV reduces the size of the bounded proof \PdrEV maintains. As it turns out, this reduction causes \PdrEV to learn redundant clauses, causing performance degradation. Hence, by preventing \PdrEV from learning redundant clauses its performance can be improved. 
    This is where the third point mentioned above becomes relevant.
    
    Using ER and adding auxiliary variables in \Pdr break many of its nice properties. For example, syntactic subsumption checks are no longer effective and may cause the proof to include many redundant clauses. This becomes a bottleneck when the number of such redundant clauses grows large. Another example comes from \emph{propagation}. If \PdrEV uses the same propagation mechanism as in \Pdr it results in \PdrEV re-learning many of the clauses that already exist in the bounded proof it maintains. Thus, while \PdrEV is based on \Pdr, in order to make it efficient, the algorithm had to be modified such that it takes into account the existence of auxiliary variables that are introduced by ER.

    We implemented \PdrEV in an open-source tool and evaluated it on the last three Hardware Model Checking Competition~\cite{DBLP:conf/fmcad/BiereDH17} benchmarks from 2019, 2020 and 2024 (HWMCC'19/20/24). We compared \PdrEV against our own implementation of \Pdr as well as the implementation of \Pdr in ABC. Our experimental evaluation shows that \PdrEV is \emph{superior} to \Pdr. It solves many problems that are not solvable by \Pdr and performs better overall in terms of runtime and number of solved instances. In addition, it produces shorter proofs on almost all instances. This leads us to conclude that \PdrEV is the first step in making strong proof systems \emph{efficiently usable} in model checking algorithms.
    
    


\subsection{Related Work}

The ER proof system received attention from the SAT community due to its strength. If \satsolvers can efficiently utilize ER, the impact can be enormous. There have been a few works that tried to incorporate ER into \satsolvers~\cite{DBLP:conf/csr/SinzB06,DBLP:conf/aaai/AudemardKS10,DBLP:journals/ai/Huang10,DBLP:conf/hvc/MantheyHB12,DBLP:conf/sat/HaberlandtGH23,DBLP:journals/tocl/BryantH23}. While these works are in the context of \satsolvers and proofs of unsatisfiability, our template-based algorithm for applying the extension rule draws intuition from~\cite{DBLP:conf/aaai/AudemardKS10,DBLP:conf/hvc/MantheyHB12}. In both these works the extension rule is applied on two clauses of the form $(l_1\lor A)$ and $(l_2 \lor A)$, driving the addition of an auxiliary variable $x\liff l_1\land l_2$. Our algorithm extends this approach and can identify other templates that can also introduce an auxiliary variable of the form $x\liff l_1\lxor l_2$. Moreover, while all of these approaches operate on a given formula in Conjunctive Normal Form (CNF) and try to speed up satisfiability checking, in the case of \PdrEV, it operates on a number of formulas in CNF that change as the algorithm makes progress when searching for an inductive invariant.

The closest work to ours is~\cite{DBLP:conf/fmcad/DurejaGIV21}. It extends \Pdr to allow it to express inductive invariants over both state variables and internal signals. While this work allows \Pdr to use auxiliary variables when searching for the inductive invariant, the set of possible auxiliary variables is limited only to these variables that appear in the verified model. In contrast, \PdrEV does not have this limitation and it can add auxiliary variables that are defined by an arbitrary Boolean function. Moreover, the algorithm presented in~\cite{DBLP:conf/fmcad/DurejaGIV21} does not perform better than \Pdr overall and has only shown better performance on a specific class of verification problems. Unlike~\cite{DBLP:conf/fmcad/DurejaGIV21}, \PdrEV outperforms \Pdr.

Another line of work that uses a similar principle is based on ``ghost code'' addition~\cite{DBLP:conf/vmcai/ChangL05,DBLP:conf/pldi/HalbwachsP08,DBLP:journals/fmsd/FilliatreGP16,DBLP:conf/vstte/BeckerM19,DBLP:conf/vmcai/ChevalierF20,DBLP:journals/pacmpl/ClochardMP20}. Ghost code augments programs with verification-specific constructs (e.g., ghost variables or assertions) that enable concise proofs without affecting runtime behavior. Intuitively, this is similar to the addition of auxiliary variables by ER.

\section{Preliminaries}


Given a set $U$ of Boolean variables,
a \emph{literal} $\ell$ is a variable $u\in U$ or its negation $\neg u$,
$var(l)$ denotes the associated variable of the literal $l$, and we denote by
$Lits(U)$ the set of all literals over $U$.
A \emph{clause} is a
disjunction of literals, and a formula in \emph{Conjunctive Normal
  Form} (CNF) is a conjunction of clauses. It is
convenient to treat a clause as a set of literals, and a CNF as a set
of clauses. We refer to a conjunction of literals as a \emph{cube}.

\paragraph{Safety verification:} A transition system $T$ is a tuple
$(\Vars, \Init, \Tr, \Bad)$, where $\Vars$ is a set of variables that defines the states of $T$ (i.e., 
all valuations to $\Vars$), $\Init$
and $\Bad$ are formulas with variables in $\Vars$ denoting the set
of initial states and bad states, respectively, and $\Tr$ is a formula
with free variables in $\Vars \cup \Vars'$, denoting the transition
relation. A state $s\in T$ is said to be reachable if and only if (iff)
there exists a state  
$s_0\in\Init$, and $(s_i,s_{i+1})\in\Tr$ for $0\leq i < N$, \mbox{and $s = s_N$.}

A transition system $T$ is UNSAFE iff there exists a state $s\in\Bad$
s.t. $s$ is reachable. 
When $T$ is UNSAFE and $s_N\in\Bad$ is the reachable state, the path from $s_0\in\Init$ to $s_N$ is called a \emph{counterexample} (CEX).

A transition system $T$ is SAFE iff all reachable states in $T$ do not
satisfy $\Bad$. Equivalently, there exists a formula $\Inv$, called a
\emph{safe inductive invariant}, satisfying: $\Init(\Vars) \limp Inv(\Vars)$, $\Inv(\Vars) \land \Tr(\Vars,\Vars') \limp \Inv(\Vars')$ and $\Inv(\Vars)
  \limp \neg \Bad(\Vars)$.
A \emph{safety} verification problem is to decide whether a transition
system $T$ is UNSAFE or SAFE, i.e., whether there exists an initial state
in $\Init$ that can reach a bad state in $\Bad$, or synthesize a safe
inductive invariant.

In SAT-based model checking, the verification problem is determined by
computing over-approximations of the states reachable in $T$ and, by
that, trying to either construct an invariant or find a
CEX.


\paragraph{Relative induction:} Given two formulas $F(\Vars)$ and $G(\Vars)$, $G$ is \emph{inductive relative} to $F$ iff the following two conditions hold:
\begin{align}
    Init (\Vars) \limp & G(\Vars) &
    (F(\Vars)\land G(\Vars))\land Tr(\Vars, \Vars ')  \limp G(\Vars ')
\end{align}


\section{Property Directed Reachability}

In this section, we give a brief overview of \Pdr~\cite{DBLP:conf/vmcai/Bradley11,DBLP:conf/fmcad/EenMB11}.
While \Pdr is well-known, we present the necessary details for the rest of the paper.
Let us fix a transition system $T = (\Vars, \Init, \Tr, \Bad)$.

The main data-structure maintained by \Pdr is a sequence of formulas $\vF=\trace$, called an \emph{inductive trace}, or simply a \emph{trace}. 
An inductive trace $\vF=\trace$ satisfies the
following two properties:
\begin{align}
  \Init(\Vars) &= F_0(\Vars) & \forall 0 \leq i < N \cdot F_i(\Vars) \land
  \Tr(\Vars,\Vars') \limp F_{i+1}(\Vars')
\end{align}
The \emph{size} of $\vF = \trace$ is $|\vF| = N$.
An element $F_i$ of a trace is called a \emph{frame}. 
The index of a frame is called a \emph{level}. $\vF$ is \emph{clausal} when all 
its elements are in CNF. We abuse notation and treat every frame $F_i$ as a set of clauses.

An inductive trace is \emph{safe} if each $F_i$ is safe: $F_i\limp\neg\Bad$;
\emph{monotone} if $\forall 0 \leq i < N, \quad F_i \limp F_{i+1}$; and
\emph{closed} if $\exists 1 \leq i \leq N
\cdot F_i \limp \left(\Lor_{j=0}^{i-1} F_j \right)$.

a transition system:
\begin{lemma}
If a transition system
$T$ admits a safe trace $\vF$ of size $|\vF| = N$, then $T$ does not admit
counterexamples of length less than, or equal to $N$. 
\end{lemma}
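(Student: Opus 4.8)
The plan is to prove the contrapositive: assuming $T$ has a counterexample of length $k \le N$, I will derive a contradiction with the defining properties of a safe trace. So suppose there is a sequence of states $s_0, s_1, \ldots, s_k$ with $s_0 \models \Init$, $(s_i, s_{i+1}) \models \Tr$ for all $0 \le i < k$, and $s_k \models \Bad$. The key object to track is the membership of $s_i$ in the frame $F_i$, which I will establish by induction on $i$.

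The base case is immediate: since $\Init(\Vars) = F_0(\Vars)$, we have $s_0 \models F_0$. For the inductive step, suppose $s_i \models F_i$ for some $i$ with $0 \le i < k \le N$; in particular $i < N$, so the trace property $F_i(\Vars) \land \Tr(\Vars,\Vars') \limp F_{i+1}(\Vars')$ applies. Since $s_i \models F_i$ and $(s_i, s_{i+1}) \models \Tr$, the valuation $(s_i, s_{i+1})$ satisfies the antecedent, hence $s_{i+1} \models F_{i+1}$. This completes the induction, giving $s_i \models F_i$ for all $0 \le i \le k$.

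Now instantiate at $i = k$: we have $s_k \models F_k$. But $k \le N$, so $F_k$ is one of the frames $F_0, \ldots, F_N$ of the safe trace, and safety gives $F_k \limp \neg\Bad$, hence $s_k \models \neg\Bad$. This contradicts $s_k \models \Bad$ from the assumed counterexample. Therefore no counterexample of length $\le N$ can exist.

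I do not anticipate a genuine obstacle here — the argument is a routine induction — but the one point that needs care is the indexing: the trace property only quantifies over $0 \le i < N$, so one must verify that the inductive step is only ever invoked at indices strictly below $N$, which is exactly where the hypothesis $k \le N$ is used. One should also be mindful that a "counterexample of length $k$" involves $k$ transition steps and $k+1$ states, and that the statement's bound is on this length, so the relevant frame index reached is $k$, which lies in $\{0,\ldots,N\}$ and is thus a legitimate frame of $\vF$.
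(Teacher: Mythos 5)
Your proof is correct. The paper states this lemma without proof (it is a standard fact about inductive traces), and your argument is exactly the routine one: an induction showing $s_i \models F_i$ for each state along the alleged counterexample, followed by the contradiction between $s_k \models F_k$, the safety condition $F_k \limp \neg\Bad$, and $s_k \models \Bad$. Your attention to the indexing --- that the inductive step only invokes the trace property at indices $i < N$, and that a counterexample of length $k$ has $k+1$ states so the final frame reached is $F_k$ with $k \le N$ --- is precisely the care the statement requires.
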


We refer to such a trace as \emph{a bounded proof of correctness}. 
For an unbounded proof of correctness, the trace also needs to be closed:
\begin{theorem}
  \label{thm:closed-safety}
  A transition system $T$ is SAFE iff it admits a safe closed trace.
\end{theorem}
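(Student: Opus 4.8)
The plan is to prove both directions of the biconditional in Theorem~\ref{thm:closed-safety}.

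\medskip

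\noindent\textbf{($\Leftarrow$) From a safe closed trace to SAFE.}
Suppose $T$ admits a safe closed trace $\vF = F_0,\ldots,F_N$. Let $1 \leq i \leq N$ be the index witnessing closure, so that $F_i \limp \Lor_{j=0}^{i-1} F_j$. The plan is to exhibit a safe inductive invariant. A natural candidate is $\Inv := \Lor_{j=0}^{i-1} F_j$ (or equivalently $\Lor_{j=0}^{i} F_j$, using closure to collapse the last disjunct). First I would check initiation: since $\Init = F_0$ is one of the disjuncts, $\Init \limp \Inv$ is immediate. Next, safety: each $F_j$ is safe by assumption, so $F_j \limp \neg\Bad$ for every $j$, hence the disjunction $\Inv \limp \neg\Bad$. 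The key step is consecution. Take a state $s$ with $s \models \Inv$ and $(s,s') \models \Tr$. Then $s \models F_j$ for some $0 \leq j \leq i-1$. By the trace property $F_j \land \Tr \limp F_{j+1}'$, we get $s' \models F_{j+1}$. If $j+1 \leq i-1$ we are done since $F_{j+1}$ is a disjunct of $\Inv$; if $j+1 = i$, then $s' \models F_i$, and closure gives $s' \models \Lor_{j=0}^{i-1} F_j = \Inv$. Either way $s' \models \Inv$, establishing $\Inv \land \Tr \limp \Inv'$.

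\medskip

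\noindent\textbf{($\Rightarrow$) From SAFE to a safe closed trace.}
Suppose $T$ is SAFE, so there exists a safe inductive invariant $\Inv$ with $\Init \limp \Inv$, $\Inv \land \Tr \limp \Inv'$, and $\Inv \limp \neg\Bad$. The plan is to build a short trace from $\Inv$. Consider the length-$1$ sequence $\vF = F_0, F_1$ with $F_0 := \Init$ and $F_1 := \Inv$. I would verify this is an inductive trace: $F_0 = \Init$ by construction, and $F_0 \land \Tr = \Init \land \Tr \limp \Inv \land \Tr \limp \Inv' = F_1'$ using $\Init \limp \Inv$ and consecution of $\Inv$. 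It is safe: $F_0 = \Init \limp \Inv \limp \neg\Bad$ and $F_1 = \Inv \limp \neg\Bad$. Finally it is closed with witness $i = 1$: we need $F_1 \limp F_0$, i.e.\ $\Inv \limp \Init$ — which need not hold. So the naive choice fails, and the fix is to take $F_0 = F_1 = \Inv$ instead, or a two-frame trace $\Init, \Inv, \Inv$; then closure at the top frame ($F_N \limp \Lor_{j<N} F_j$, with $\Inv$ appearing below) holds trivially, consecution at every level follows from inductiveness of $\Inv$, and safety holds at every frame. (One must keep $F_0 = \Init$ to respect the trace definition, so the clean construction is $F_0 := \Init$, $F_1 := \Inv$, $F_2 := \Inv$, with closure witnessed by $i=2$ since $F_2 = \Inv \limp F_1 = \Inv$.)

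\medskip

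\noindent\textbf{Main obstacle.}
The routine parts are the initiation and safety checks, which are immediate from the definitions. The delicate point is getting the closure condition to line up with the trace definition's requirement that $F_0 = \Init$: one cannot simply set the whole trace equal to $\Inv$, so in the ($\Rightarrow$) direction the argument must carefully pad the trace so that an $\Inv$-frame sits strictly below the closing frame. In the ($\Leftarrow$) direction the subtle step is the boundary case of consecution where $F_j \land \Tr$ pushes a state into $F_i$ rather than into a frame already inside $\Inv$; this is exactly where the closure hypothesis is used, and it is the crux of why closure (and not merely safety) yields an \emph{unbounded} proof of correctness.
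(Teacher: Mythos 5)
Your proof is correct, and it is the standard argument for this well-known fact; the paper itself states Theorem~\ref{thm:closed-safety} without proof, so there is nothing to diverge from. Both directions are handled properly — in particular you correctly identify that the closure index is exactly what rescues consecution at the boundary in the ($\Leftarrow$) direction, and that in the ($\Rightarrow$) direction the trace $[\Init, \Inv, \Inv]$ must be padded so that closure is witnessed at $i=2$ rather than forcing $\Inv \limp \Init$.
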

Thus, safety verification is reduced to searching for a safe closed trace or finding a CEX. In particular, \Pdr iteratively extends $\vF$ such that $\vF$ is safe. This procedure continues until $\vF$ is either closed or it cannot be extended and remain safe. In the latter case, \Pdr can construct a counterexample that shows why $\vF$ cannot be extended. It is important to note that the trace $\vF$ \Pdr constructs is \emph{syntactically monotone},
namely $\forall 0 < i < N, \quad F_i \supseteq F_{i+1}$. Hence, $\vF$ is closed when there exists a level $0 < i < N$ such that $F_i = F_{i+1}$. We note that in most implementations, due to the monotonicity of $\vF$, it is often represented using a \emph{delta-trace}. Given a monotone trace $\vF$ of size $N$, the \emph{delta}-trace of  $\vF$ is defined as $\vD = [D_0,\ldots, D_N]$ such that $D_0 = F_0$, $D_i = F_i\setminus F_{i+1}$ for  $0 < i < N$ and $D_N = F_N$. In what follows we use $\vF$ and $\vD$ interchangeably.

Algorithm~\ref{alg:pdr_mainLoop} presents a high level view of \Pdr's main loop. We only present the details that are required for this paper. 
\Pdr starts by initializing an inductive trace $\vF = [F_0]$ where $F_0 = \Init$ (line~\ref{alg:main:init}). The main loop iteratively extends the trace $\vF$ and tries to make it safe (line~\ref{alg:main:mk_safe}). If it fails, then a counterexample is returned (line~\ref{alg:main:cex}). Otherwise, a new frame is added to $\vF$ and it is initialized to $\top$. Once a new frame is added, \Pdr tries \emph{propagating} clauses by adding clauses that exist in the frame $F_i$ into the subsequent frame $F_{i+1}$ (line~\ref{alg:main:propagate}). If after propagation there exists a frame that equals its subsequent frame, a safe inductive invariant is found and \Pdr terminates. Otherwise, it moves to the next iteration.

\paragraph{Redundancy:} During the execution of \textsc{MkSafe()}, \Pdr adds clauses to $\vF$. When inserting a new clause $c$ into a frame $F_i\in\vF$, \Pdr removes subsumed clauses. This is done with a simple syntactic subsumption check. Namely, for every $0 < j \leq i$, if there exists a clause $d\in F_j$ such that $c\subseteq d$, $d$ is removed from $F_j$.

\paragraph{Propagation:} \Pdr performs propagation in order to construct a \emph{closed} trace. Recall that $\vF$ is \emph{syntactically monotone}. Therefore, if all clauses in $F_i \setminus F_{i+1}$ (i.e. in $D_i$) can be \emph{propagated} to $F_{i+1}$ then $\vF$ becomes closed and \Pdr terminates. Propagation is performed by checking if a clause $c\in D_i$ satisfies the following condition: $F_i\land\Tr\limp c'$. If this condition holds, $c$ can be added to $F_{i+1}$.

\paragraph{Generalization:} In a monotone
trace $\vF$, a frame $F_i\in\vF$ over-approximates the set of states reachable in up to $i$
steps of the $\Tr$. Since $\vF$ is clausal and $F_i$ is in CNF, every clause $c\in F_i$ blocks states that are \emph{necessarily unreachable} in up to $i$ steps. 
Assume a set of states is represented by the cube $\varphi$. If \Pdr identifies that the states in $\varphi$ are unreachable in up to $i$ steps, it can add the clause $c = \neg\varphi$ to $F_i$. This means that the formula $(F_{i-1}\land c)\land\Tr\limp c'$ is valid. \Pdr uses this fact and tries to construct a \emph{stronger clause} $d\subseteq c$ such that $\Init\limp d$ and $(F_{i-1}\land d)\land\Tr\limp d'$ are valid. By that, it can deduce that a larger set of states is unreachable in up to $i$ steps. \Pdr refers to this process as \emph{inductive generalization} as it is based on the fact that $c$ is \emph{relatively inductive} w.r.t. $F_{i-1}$.

\begin{figure}[h]
\vspace{-8pt}
    \centering

    \begin{adjustbox}{scale=0.7}
    \begin{minipage}[t]{7.5cm}
    \begin{algorithm}[H]
    \caption{\Pdr}
    \label{alg:pdr_mainLoop}
    \KwIn{Transition System}
    \KwOut{Proof or counterexample}
    \SetKw{Initially}{Initially:}
    \Initially{$\vF = [\Init]$\label{alg:main:init}}
    
    \While{true} {
        $\vF$ = \textsc{MkSafe($\vF$)}\;\label{alg:main:mk_safe}
        \If {$\neg\vF$.\textsc{IsSafe()}} {
           \Return cex\;\label{alg:main:cex}
        }
        $\vF$.\textsc{Extend($\top$)}\;
        $\vF$.\textsc{Propagate()}\;\label{alg:main:propagate}
        $N$ = $\vF$.\textsc{Size()}\;
        \If {$\exists 0\leq i < N\cdot F_i = F_{i+1}$} {
            \Return $F_i$\;\label{alg:main:inv}
        }
        
    }
    \end{algorithm}
    \end{minipage}
    \end{adjustbox}
    \hfill
    \begin{adjustbox}{scale=0.7}
    \begin{minipage}[t]{7.5cm}
    \begin{algorithm}[H]
    \caption{\PdrEV}
    \label{alg:pdrer_mainLoop}
    \KwIn{Transition System}
    \KwOut{Proof or counterexample}
    \SetKw{Initially}{Initially:}
    \Initially{$\vG = [\Init]$\label{alg:mainer:init}}

    size = 0\;
    \While{true} {
        $\vG$ = \textsc{MkSafe($\vG$)}\;\label{alg:mainer:mk_safe}
        \If {$\neg\vG$.\textsc{IsSafe()}} {
           \Return cex\;\label{alg:mainer:cex}
        }
        \If {$\vG.\textsc{NumClauses}() > \text{size} + \Delta$ } {\label{alg:mainer:heuristic_begin}
            $\vG.\textsc{ReEncode}()$\;
            size = $\vG.\textsc{NumClauses}()$\;\label{alg:mainer:heuristic_end}
        }
        $\vG$.\textsc{Extend($\top$)}\;
        $\vG$.\textsc{Propagate()}\;\label{alg:mainer:propagate}
        $N$ = $\vG$.\textsc{Size()}\;
        \If {$\exists 0\leq i < N\cdot G_i = G_{i+1}$} {
            \Return $G_i$\;\label{alg:mainer:inv}
        }
        
    }
    \end{algorithm}
    \end{minipage}
    \end{adjustbox}
    \vspace{-25pt}
\end{figure}

\section{Applying the Extension Rule}\label{sec:apply_er}

In this section we describe how we apply the \emph{extension rule} in \PdrEV.

The goal of using ER in \PdrEV is to enable it to construct shorter bounded proofs of correctness, and by that, construct a smaller inductive invariant. When applying the extension rule, \PdrEV introduces auxiliary variables along with their definitions, and by that re-encodes the trace it maintains. We draw intuition from \satsolvers that take advantage of ER~\cite{DBLP:conf/aaai/AudemardKS10,DBLP:conf/hvc/MantheyHB12,DBLP:conf/sat/HaberlandtGH23}.

Our algorithm is based on \emph{template-matching}. It uses a pre-defined set of templates and tries to match clauses in the trace against these templates. When a template is matched, a new auxiliary variable is added along with its definition, and the clauses that match the template are replaced with new clauses, which contain the auxiliary variable. Next we give formal definitions for a template and a matched set of clauses.

Let $\mathcal{V}$ be a set of template variables (disjoint from variables of the transition system $T$). A template is defined as $\tau\subseteq \mathcal{P}(Lits(\mathcal{V}))$. Let $C$ be a set of clauses, the literals appearing in all clauses in $C$ are denoted as $L(C) = \cup C$.

\begin{definition}[Match]\label{def:match}
    Let $C$ be a set of clauses and $\tau\subseteq\mathcal{P}(Lits(\mathcal{V}))$ a template.
    We say that $C$ matches $\tau$ iff there exists an injective substitution $\sigma: \mathcal{V} \rightarrow L(C)$ and a bijective function $m : \tau \rightarrow C$ such that:
    \begin{enumerate}
        \item Match rule: $\forall M\in \tau,\exists c\in C\cdot (m(M) = c) \land (M[\sigma]\subseteq c)$ 
        \footnote{$M[\sigma] = \{ \sigma(l) | l \in M, l = var(l)\} \cup \{ \neg \sigma(var(l)) | l \in M, l = \neg var(l) \}$}

        \item Exclusivity rule: $\forall M_1,M_2\in \tau,\forall c_1,c_2\in C \cdot (m(M_1) = c_1 \land m(M_2) = c_2)\limp ((c_1\setminus M_1[\sigma]) = (c_2\setminus M_2[\sigma]))$
    \end{enumerate}
\end{definition}

For simplicity of presentation, let us define the set of template variables as $\mathcal{V} = \{\alpha, \beta, \gamma, \delta\}$. We use the following three templates when applying the extension rule: 
\begin{enumerate*}[label=(\arabic*)]
    \item \andt-template: $\tau_{\land} = \{\{\alpha\},\{\beta\}\}$;
    \item \xort-template: $\tau_{\lxor} = \{\{\alpha, \beta\},\{ \neg\alpha, \neg\beta\}\}$;
    \item \ha-template: $\tau_{ha} = \{\{\alpha,\beta,\gamma\},\{\alpha,\beta,\delta\},\{\neg\alpha,\neg\beta,\gamma,\delta\}\}$
\end{enumerate*}

\begin{definition}[Instantiation]\label{def:instantiation}
    Given a set of clauses $C$, assume $C$ matches $\tau_{*}$ where $*\in\{\land,\lxor,ha\}$. The \emph{instantiation of $\tau_*$} is:
    \begin{enumerate}
        \item $\sigma[\tau_{\land}] = \{\{\sigma(\alpha)\},\{\sigma(\beta)\}\}$ when $\tau_* = \tau_{\land}$.
        \item $\sigma[\tau_{\lxor}] = \{\{\sigma(\alpha),\sigma(\beta)\},\{\neg\sigma(\alpha),\neg\sigma(\beta)\}\}$ when $\tau_* = \tau_{\lxor}$.
        \item $\sigma[\tau_{ha}] = \{\{\sigma(\alpha),\sigma(\beta),\sigma(\gamma)\},\{\sigma(\alpha),\sigma(\beta),\sigma(\delta)\},\{\neg\sigma(\alpha),\neg\sigma(\beta),\sigma(\gamma),\sigma(\delta)\}\}$ when $\tau_* = \tau_{ha}$.
    \end{enumerate}
\end{definition}

The \andt-template and \xort-template require two clauses for a match. The clauses should be of the following form: $\{(\alpha \lor A), (\beta\lor A)\}$ for $\tau_{\land}$ (two clauses that are similar except for one literal) and $\{(\alpha\lor\beta \lor A), (\neg\alpha\lor \neg\beta\lor A)\}$ for $\tau_{\lxor}$.
When a set of clauses is matched with either $\tau_{\land}$ or $\tau_{\lxor}$, a new auxiliary variable $x$ can be added such that 
$x\liff \sigma(\alpha)\land \sigma(\beta)$ or $x\liff \sigma(\alpha)\lxor\sigma(\beta)$, respectively. Note that $\sigma$ is the substitution function for the match (see Definition~\ref{def:match}). The matched clauses are then replaced with the clause $(x \lor A)$.


\begin{figure}
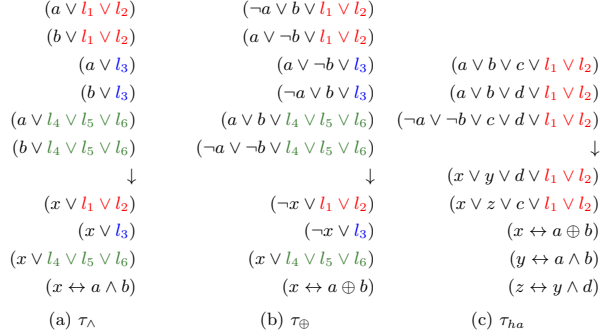

\centering

\begin{adjustbox}{scale=0.75}

\begin{subfigure}
{0.3\textwidth}
\centering
\begin{math}
\begin{aligned}[c]
(a \lor \textcolor{red}{l_1 \lor l_2}) \\
(b \lor \textcolor{red}{l_1 \lor l_2}) \\
(a \lor \textcolor{blue}{l_3}) \\
(b \lor \textcolor{blue}{l_3}) \\
(a \lor \textcolor{OliveGreen}{l_4 \lor l_5\lor l_6}) \\
(b \lor \textcolor{OliveGreen}{l_4 \lor l_5\lor l_6}) \\
\downarrow \\ 
(x \lor \textcolor{red}{l_1 \lor l_2}) \\
(x \lor \textcolor{blue}{l_3}) \\
(x \lor \textcolor{OliveGreen}{l_4 \lor l_5\lor l_6}) \\
(x \liff a\land b)
\end{aligned}
\end{math}
\caption{$\tau_{\land}$}\label{fig:and}
\end{subfigure}
\begin{subfigure}
{0.3\textwidth}
\centering
\begin{math}
\begin{aligned}[c]
(\neg a \lor b  \lor \textcolor{red}{l_1 \lor l_2}) \\
(a \lor \neg b \lor \textcolor{red}{l_1 \lor l_2})  \\
(a \lor \neg b \lor \textcolor{blue}{l_3}) \\
(\neg a \lor b  \lor \textcolor{blue}{l_3}) \\
(a \lor b \lor \textcolor{OliveGreen}{l_4 \lor l_5 \lor l_6}) \\
(\neg a \lor \neg b \lor \textcolor{OliveGreen}{l_4 \lor l_5\lor l_6}) \\
\downarrow \\ 
(\neg x \lor \textcolor{red}{l_1 \lor l_2}) \\
(\neg x \lor \textcolor{blue}{l_3}) \\
(x \lor \textcolor{OliveGreen}{l_4 \lor l_5\lor l_6}) \\
(x \liff a\lxor b)
\end{aligned}
\end{math}
\caption{$\tau_{\lxor}$}\label{fig:xor}
\end{subfigure}
\begin{subfigure}
{0.3\textwidth}
\centering
\begin{math}
\begin{aligned}[c]
(a \lor b \lor c \lor \textcolor{red}{l_1 \lor l_2}) \\
(a \lor b \lor d \lor \textcolor{red}{l_1 \lor l_2}) \\
(\neg a \lor \neg b \lor c \lor d \lor \textcolor{red}{l_1 \lor l_2}) \\
\downarrow \\ 
(x \lor y \lor d \lor \textcolor{red}{l_1 \lor l_2}) \\
(x \lor z \lor c \lor \textcolor{red}{l_1 \lor l_2}) \\
(x \liff a\lxor b) \\
(y \liff a\land b) \\ 
(z \liff y\land d)
\end{aligned}
\end{math}
\caption{$\tau_{ha}$}\label{fig:halfadder}
\end{subfigure}

\end{adjustbox}

\caption{Clauses at the top match $\tau_{\land}$, $\tau_{\lxor}$ and $\tau_{ha}$ from left to right, respectively. Applying the extension rule transform them into the clauses at the bottom. The instantiations are $\{\{a\},\{b\}\}$ for $\tau_{\land}$, $\{\{ a,b\},\{\neg a, \neg b\}\}$ for $\tau_{\lxor}$ and $\{\{ a, b, c\}, \{ a, b, d\}, \{ \neg a, \neg b,c,d \} \}$ for $\tau_{ha}$.
}
\label{fig:bva_patterns}
\vspace{-12pt}
\end{figure}

The third template, $\tau_{ha}$, is more complex. Unlike $\tau_{\land}$ and $\tau_{\lxor}$, this template requires three clauses of the following form: $\{(\alpha\lor\beta\lor\gamma\lor A), (\alpha\lor\beta\lor\delta\lor A), (\neg\alpha\lor\neg\beta\lor\gamma\lor\delta\lor A)\}$.
When a set of clauses is matched with $\tau_{ha}$, three new auxiliary variables $x, y, z$ can be added such that 
$x \liff \sigma(\alpha) \lxor \sigma(\beta)$,
$y \liff \sigma(\alpha) \land \sigma(\beta)$, and
$z \liff y \land \sigma(\delta)$.
The matched clauses are then replaced with the clauses $(x \lor y \lor \delta \lor A)$, $(x \lor z \lor \gamma \lor A)$.
Figure~\ref{fig:bva_patterns} presents examples of sets of clauses that match $\tau_{\land}$, $\tau_{\lxor}$ and $\tau_{ha}$, respectively. 

\subsection{Matching Templates on an Inductive Trace}

In this section we describe the algorithm that analyzes the inductive trace \Pdr maintains in order to match templates and apply the extension rule.

Using ER as part of \Pdr is different from using ER in SAT for the following reasons:
\begin{enumerate*}[label=(\roman*)]
    \item there are multiple formulas in CNF that need to be considered (a trace $\vF$ contains multiple formulas);
    \item the CNF formulas change throughout the run of the algorithm; and
    \item the operations performed by \Pdr during its execution must be taken into account.
\end{enumerate*}

\paragraph{Multiple CNFs:} When an auxiliary variable is added to $\vF$, a set of clauses is re-encoded and replaced with other clauses. As an example, consider two clauses $c_1 = (a\lor b \lor l_1)$ and $c_2 = (\neg a\lor \neg b\lor l_1)$ that match $\tau_{\lxor}$. A new auxiliary variable $x\liff a\lxor b$ is added, and a new clause $c = (x \lor l_1)$ replaces $c_1$ and $c_2$. Note that $c$ represents \emph{a set of clauses}, namely it replaces the conjunction $c_1\land c_2$. Now, let us assume that $c_1,c_2\in F_i$ while $c_2\not\in F_{i+1}$. This could be due to \Pdr \emph{propagating} $c_1$ from level $i$ to $i+1$, while failing to propagate $c_2$ from level $i$ to $i+1$. If we consider $\vF$ as the union of all clauses that appear in it and perform matching on this union, then $x$ and $c$ are used to re-encode $\vF$ as a whole. In this case $c_1$ and $c_2$ are removed from $\vF$, and therefore $c_1$ must be removed from $F_{i+1}$. Such an operation causes information loss. In order to avoid such a scenario, when ER is applied to $\vF$, each frame is considered independently. Since $\vF$ is monotone, the algorithm analyzes only the difference between the frames, namely the delta-trace. 

\paragraph{Evolving CNF:} Since the inductive trace $\vF$ changes during the execution of \Pdr, the addition of auxiliary variables must not be aggressive. Consider the case where an auxiliary variable $x$ is added and $\vF$ is re-encoded, resulting in $\vF^x$. Later, \Pdr continues and adds the set of clauses $C$ to $\vF^x$. Some templates may not match on $\vF^x\cup C$, while they can be matched if the matching algorithm is applied on $\vF\cup C$. This is due to the fact that template matching is \emph{syntactic}. Hence, we apply re-encoding periodically based on a heuristic that inspects the number of clauses in $\vF$ (see Section~\ref{sec:pdrer}).

\paragraph{Matching Algorithm:} Matching the given templates can be done in polynomial time. A na\"{i}ve implementation tries all combinations of two ($\tau_{\land}$ and $\tau_{\lxor})$ or three ($\tau_{ha}$) clauses, resulting in $O(n^2)$ or $O(n^3)$ complexity, respectively. However, such an implementation is time consuming, and in practice, not efficient. 

In order to efficiently implement the template matching algorithm, we partition the clauses in a CNF into buckets, where each bucket contains clauses of a similar size (i.e. with the same number of literals). This is based on the following observation. In order for two clauses $c_1$ and $c_2$ to match $\tau_{\land}$ and $\tau_{\lxor}$, they must be of the same size ($|c_1|=|c_2|$). In addition, for three clauses $c_1,c_2$ and $c_3$ to match $\tau_{ha}$, it must hold that $|c_1|=|c_2|=|c_3|-1$. 
The template matching algorithm uses these conditions. It searches in a given bucket when trying to match $\tau_{\land}$ and $\tau_{\lxor}$, and searches in two consecutive buckets when trying to match $\tau_{ha}$.  

\begin{figure}[h!]
\vspace{-10pt}
\centering

\begin{adjustbox}{scale=0.7}
    \begin{minipage}[t]{8.6cm}
\begin{algorithm}[H]
\caption{\textsc{ReEncode}}\label{alg:reencode}
\KwIn{$\vD$, a delta-trace}
\KwOut{a delta-trace}
    \For {$D_i\in \Vec{D}$} {
        $M = M\cup \textsc{MatchT}(D_i)$\;\label{fig:alg:reencode:match}
    }


    \tcc{Cluster matches by their instantiations}
    $clusters = \textsc{HashMap}()$\;\label{fig:alg:reencode:inst_begin}
    \For{$m \in M$} {
        $key = m.\textsc{Instantiation()}$\;
        $clusters[key] = clusters[key] \cup m$\;
    }\label{fig:alg:reencode:inst_end}
    
    \tcc{Choose best template to use}
    $key = choose(clusters)$ \;\label{fig:alg:reencode:choose}
    \If{$key == None$} {
        \Return {$\vec{D}$}\;
    }

    \tcc{Perform the re-encoding}
    \For{$m \in clusters[key]$} {\label{fig:alg:reencode:reencode_begin}
        $D_i = m.\textsc{GetCNF()}$\;
        $D_i = D_i \setminus m.\textsc{GetMatchedClauses()}$\;
        $D_i = D_i \cup m.\textsc{ApplyExtension()}$\;
    }\label{fig:alg:reencode:reencode_end}
\Return {$\vec{D}$}\;
\end{algorithm}
        
    \end{minipage}
\end{adjustbox}
\hfill
\begin{adjustbox}{scale=0.7}
    \begin{minipage}[t]{6.8cm}
    \begin{algorithm}[H]
\caption{\ImpEV}\label{alg:imp}
\label{alg:pdr_er_implies}

\KwIn{$c_1(\bar v, \bar a)$, $c_2(\bar v, \bar a)$: clauses}
\KwOut{Boolean \emph{true}/\emph{false}}
\Ensures{Returns \emph{true} iff $c_1 \limp c_2$}

\If{$c_1 \subseteq c_2$} { \label{alg:implication:subsumption}
    \Return \emph{true}\;
}
\If {$(c_1 \cup c_2) \cap Lits(\bar a) == \emptyset $} { \label{alg:implication:no_aux}
    \Return \emph{false}\;
}

\If{$\exists l\in c_1 \cdot coi(l)\cap coi(c_2) = \emptyset$} { \label{alg:implication:coi_begin}
\Return \emph{false}\;\label{alg:implication:coi_end}
}

$bdd_1 = cachedBuildBDD(c_1)$\;
$bdd_2 = cachedBuildBDD(c_2)$\;
\Return $(bdd_1 \limp bdd_2) = \top$\;\label{alg:implication:bdd}

\end{algorithm}
    \end{minipage}
\end{adjustbox}
\vspace{-15pt}
 \end{figure}

The overall procedure for adding auxiliary variables and re-encoding $\vF$ appears in Algorithm~\ref{alg:reencode}. 
The algorithm starts by iterating over all elements of the delta-trace $\vD$, and applying the matching algorithm (line~\ref{fig:alg:reencode:match}). Once all templates are identified across the different frames, the algorithm tries to detect templates that correspond to multiple frames. By that, an added auxiliary variable can be used in multiple frames and as a result, a better reduction in the number of overall clauses is achieved. The algorithm clusters all matches using their template instantiation (lines~\ref{fig:alg:reencode:inst_begin}-\ref{fig:alg:reencode:inst_end}).
Once all matches are clustered, the algorithm chooses a match according to a pre-defined heuristic (line~\ref{fig:alg:reencode:choose}). For example, throughout our experiments we noticed the $\tau_{\lxor}$ is usually preferable over other templates, even when other templates may result in a more significant reduction in the number of clauses in $\vD$.
Lastly, the algorithm iterates over all matches for the selected instantiation and performs the re-encoding of $\vD$. More precisely, it adds a new auxiliary variable along with its definition, and replaces the corresponding clauses according to the type of template (lines~\ref{fig:alg:reencode:reencode_begin}-\ref{fig:alg:reencode:reencode_end}).

Note that while the presented algorithm only uses one instantiation, in practice we can choose a number of instantiations and thus introduce many new auxiliary variables at each invocation of \textsc{ReEncode}.

\section{\PdrEV}\label{sec:pdrer}

In this section we describe a new model checking algorithm, \PdrEV, which is based on \Pdr. \PdrEV uses ER to re-encode the trace and includes other algorithmic modifications that take into account the added auxiliary variables. We emphasize that simply using ER to re-encode the trace does not use the full power of ER.
Moreover, once the auxiliary variables are added, many of the nice features of \Pdr break. For example, syntactic subsumption is no longer effective, this results in redundant clauses that slow down \Pdr. Before describing the details of \PdrEV we generalize the notion of an invariant and an inductive trace to take into account the addition of auxiliary variables.

\begin{definition}[Auxiliary Circuit]\label{def:aux_circ}
    Let $T=(\Vars,\Init,\Tr,\Bad)$ be a transition system and let $\bar a = \{a_1,\ldots,a_m\}$ be a set of auxiliary variables. The \emph{auxiliary circuit} for $\bar a$ is the formula $E(\Vars,\bar a) = \bigwedge\limits_{i=1}^{m} (a_i\liff l_{i_1}\star_i l_{i_2})$ such that $$\forall 1 \leq i \leq m\cdot (\star_i\in \{\land,\lxor\}) \land (l_{i_1}, l_{i_2} \in   Lits(\bar v \cup \{ a_j | a_j \in \bar a \land j < i\}))$$
\end{definition}

We define the cone-of-influence (COI) of a literal $l\in Lits(\Vars\cup\bar a)$ as
\begin{equation}
    coi(l) =
    \begin{cases}
      var(l) &  var(l)\in\Vars \\
      coi(l_1)\cup coi(l_2) &  var(l)\in\bar a \land var(l) \liff l_1\star l_2
    \end{cases}
  \end{equation}

Intuitively, the COI includes all the variables in $\Vars\cup\bar a$ that are used to define $l$ in $E(\Vars,\bar a)$.
  
From this point on, let us fix a transition system $T=(\Vars,\Init,\Tr,\Bad)$, a set of auxiliary variables $\bar a = \{a_1,\ldots,a_m\}$ and its auxiliary circuit $E(\Vars, \bar a)$.

When using ER and auxiliary variables, a safe inductive invariant is no longer expressed only in terms of $\Vars$, but also includes the added auxiliary variables. We therefore generalize the definition of a safe inductive invariant:

\begin{definition}[Generalized Invariant]\label{def:gen_inv} 
    A \emph{generalized safe inductive invariant} is a formula $\Inv(\Vars,\bar a)$ (over $\Vars\cup\bar a$) that satisfies:
\begin{align}
Init(\bar v) \land E(\bar v, \bar a) &\limp Inv(\bar v, \bar a)\label{eq:gen_inv:init} \\
Inv(\bar v, \bar a) \land E(\bar v, \bar a) \land Tr(\Vars,\Vars') \land E(\bar v', \bar a') &\limp Inv(\bar v', \bar a')\label{eq:gen_inv:cons} \\ 
Inv(\bar v, \bar a) \land E(\bar v, \bar a) &\limp \neg\Bad(\Vars)\label{eq:gen_inv:safe}
\end{align}

\end{definition}

\begin{lemma}\label{lemma:satisfaction}
    Given a formula $\varphi(\Vars,\bar a)$ over $\Vars\cup \bar a$ and assume $a_n \liff l_1\star l_2$ where $\star\in \{\land,\lxor\}$. Let $\bar a^* = \bar a\setminus\{a_n\}$ and $E^*(\Vars,\bar a^*)$ be its auxiliary circuit such that $E(\Vars, \bar a) = E^*(\Vars,\bar a^*)\land (a_n \liff l_1\star l_2)$. Let us define $\psi(\Vars, \bar a^*) := \varphi[a_n\gets l_1\star l_2]$, namely, every occurrence of $a_n$ in $\varphi$ is substituted by its definition $l_1\star l_2$. Given an assignment $z^*:\Vars\cup\bar a^*\rightarrow \{0,1\}$ over $\Vars\cup \bar a ^*$ and an assignment $z:\Vars\cup\bar a\rightarrow \{0,1\}$ over $\Vars\cup\bar a$, which satisfy the following conditions: $z(x) = z^{*}(x)$ for $x\in \Vars\cup\bar a^*$ and $z(a_n) = z^{*}(l_1)\star z^{*}(l_2)$. Then, $z\models \varphi\land E(\Vars,\bar a)$ if and only if $z^*\models \psi\land E^*(\Vars,\bar a^*)$.
\end{lemma}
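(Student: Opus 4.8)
The plan is to prove the biconditional by showing that $z$ and $z^*$ assign the same truth value to every component formula, working out from the definition of the auxiliary circuits. I would first record the decomposition $E(\Vars,\bar a) = E^*(\Vars,\bar a^*) \wedge (a_n \liff l_1 \star l_2)$, which is given, and note the hypothesis relating $z$ and $z^*$: they agree on $\Vars \cup \bar a^*$, and $z(a_n) = z^*(l_1) \star z^*(l_2)$. Since $l_1, l_2 \in Lits(\bar v \cup \{a_j : j < n\}) \subseteq Lits(\Vars \cup \bar a^*)$, the literals $l_1$ and $l_2$ receive the same value under $z$ and $z^*$; hence $z(a_n) = z(l_1) \star z(l_2)$, i.e. $z$ satisfies the clause $(a_n \liff l_1 \star l_2)$.

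The next step is a substitution lemma at the level of a single evaluation: for any formula $\chi$ over $\Vars \cup \bar a$, we have $z \models \chi$ iff $z^* \models \chi[a_n \gets l_1 \star l_2]$. This follows by structural induction on $\chi$; the only interesting case is an occurrence of the variable $a_n$, where $z(a_n) = z^*(l_1 \star l_2) = z^*(l_1) \star z^*(l_2)$ by the hypothesis, and all other atoms are in $\Vars \cup \bar a^*$ where $z$ and $z^*$ agree. Applying this with $\chi := \varphi$ gives $z \models \varphi$ iff $z^* \models \psi$ (recall $\psi := \varphi[a_n \gets l_1 \star l_2]$). Applying it with $\chi := E^*(\Vars,\bar a^*)$, which does not mention $a_n$ at all, gives $z \models E^*(\Vars,\bar a^*)$ iff $z^* \models E^*(\Vars,\bar a^*)$.

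Finally I would assemble the pieces. For the forward direction, if $z \models \varphi \wedge E(\Vars,\bar a)$ then in particular $z \models E^*(\Vars,\bar a^*)$ and $z \models \varphi$, so by the two instances of the substitution lemma $z^* \models \psi$ and $z^* \models E^*(\Vars,\bar a^*)$, i.e. $z^* \models \psi \wedge E^*(\Vars,\bar a^*)$. For the converse, if $z^* \models \psi \wedge E^*(\Vars,\bar a^*)$, then by the substitution lemma $z \models \varphi$ and $z \models E^*(\Vars,\bar a^*)$; combined with the fact established in the first paragraph that $z \models (a_n \liff l_1 \star l_2)$, we get $z \models \varphi \wedge E^*(\Vars,\bar a^*) \wedge (a_n \liff l_1 \star l_2) = \varphi \wedge E(\Vars,\bar a)$.

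The main obstacle, such as it is, is bookkeeping rather than conceptual depth: one must be careful that the definition of $\psi$ as a textual substitution interacts correctly with the semantics (the substitution lemma), and one must use the well-foundedness condition on the auxiliary circuit — namely that $l_1, l_2$ only reference $\bar v$ and earlier auxiliary variables $a_j$ with $j < n$ — to ensure $z$ and $z^*$ genuinely agree on the values of $l_1$ and $l_2$ and that removing $a_n$ leaves a legitimate auxiliary circuit $E^*$. I would state the single-evaluation substitution lemma explicitly as a sub-claim to keep the argument clean, and otherwise the proof is a short case analysis.
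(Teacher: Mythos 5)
Your proposal is correct and follows essentially the same route as the paper: decompose $E$ into $E^*$ and the definition clause $(a_n \liff l_1\star l_2)$, handle $E^*$ by noting it only mentions $\Vars\cup\bar a^*$ where $z$ and $z^*$ agree, and prove the core equivalence between $\varphi$ and $\psi$ by structural induction with the variable case $a_n$ resolved via the constraint $z(a_n)=z^*(l_1)\star z^*(l_2)$. Your explicit preliminary observation that $z\models(a_n\liff l_1\star l_2)$ follows from the well-foundedness of the auxiliary circuit is a small clarity improvement over the paper's presentation, but not a different argument.
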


\begin{proof}
    The proof is immediate from the structure of the formulas. $z\models \varphi(\Vars\cup\bar a)\land E(\Vars\cup\bar a)$ iff $z\models \varphi(\Vars\cup\bar a)\land E^*(\Vars\cup\bar a^*)\land a_n \liff l_1\star l_2$, iff $z\models \varphi(\Vars\cup\bar a)$ and $z\models E^*(\Vars\cup\bar a^*)$ and $z\models a_n\liff l_1\star l_2$. Note that since $z$ and $z^*$ agree on $\Vars\cup\bar a^*$, and by the fact $E^*$ is only over $\Vars\cup\bar a^*$, we get that $z\models E^*(\Vars\cup\bar a^*)$ iff $z^*\models E^*(\Vars\cup\bar a^*)$. Hence we only consider $z\models \varphi\land a_n\liff l_1\star l_2$. 
    
    We prove that $z\models \varphi(\Vars\cup\bar a)\land a_n\liff l_1\star l_2$ iff $z^*\models \psi(\Vars\cup \bar a^*)$ by induction on the structure of $\varphi$. For the base case, $\varphi$ is simply a variable. If that variable is in $\Vars\cup\bar a^*$, we immediately conclude $z^*\models \psi$ since $\psi = \varphi$ and $z$ agrees with $z^*$ on $\Vars\cup\bar a^*$. If the variable is $a_n$ then $\psi = l_1\star l_2$, and by definition and the fact $z\models a_n\liff l_1\star l_2$ we get that $z^*\models \psi$. For the induction step, assume $\varphi = \varphi_1 \land \varphi_2$. By the induction hypothesis $z\models \varphi_i$ iff $z^*\models \varphi_i[a_n\gets l_1\star l_2]$, for $i\in\{1,2\}$. Hence, it immediately follows that $z^*\models \varphi_1[a_n\gets l_1\star l_2]\land \varphi_2[a_n\gets l_1\star l_2]$, i.e. $z^*\models \psi$. A similar proof is applicable for other connectives, such as $\neg, \lor, \lxor$ etc.
    \qed
\end{proof}

\begin{lemma}[Auxiliary Variable Elimination]\label{lemma:av_elimination}
If $\Inv_1(\Vars,\bar a)$ is a generalized safe inductive invariant w.r.t. $T$, where $\bar a = \{a_1,\ldots,a_n\}$ for $n\geq 1$, then there exists a generalized safe inductive invariant $\Inv_2(\Vars,\bar a^*)$ where $\bar a^* = \bar a\setminus\{a_n\}$.
\end{lemma}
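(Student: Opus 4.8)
The plan is to take $\Inv_2 := \Inv_1[a_n \gets l_1\star l_2]$, where $a_n\liff l_1\star l_2$ is the defining equation of $a_n$ inside $E(\Vars,\bar a)$, so that $E(\Vars,\bar a) = E^*(\Vars,\bar a^*)\land (a_n\liff l_1\star l_2)$. By Definition~\ref{def:aux_circ}, $a_n$ is defined only in terms of $\Vars$ and auxiliary variables with smaller index, hence $l_1,l_2\in Lits(\Vars\cup\bar a^*)$; this is precisely what makes $E^*$ a legitimate auxiliary circuit for $\bar a^*$ and $\Inv_2$ a genuine formula over $\Vars\cup\bar a^*$. It then remains to verify the three conditions of Definition~\ref{def:gen_inv} (Equations~(\ref{eq:gen_inv:init})--(\ref{eq:gen_inv:safe})) for $\Inv_2$ with respect to $E^*$, and the verification is driven entirely by Lemma~\ref{lemma:satisfaction}.

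The key observation is the following. Given any assignment $z^*$ over $\Vars\cup\bar a^*$, let $z$ be its canonical extension over $\Vars\cup\bar a$ defined by $z(a_n) := z^*(l_1)\star z^*(l_2)$ and $z(x):=z^*(x)$ for $x\in\Vars\cup\bar a^*$. Then $z$ satisfies $a_n\liff l_1\star l_2$ by construction; $z\models E(\Vars,\bar a)$ iff $z^*\models E^*(\Vars,\bar a^*)$, since $E^*$ does not mention $a_n$; and Lemma~\ref{lemma:satisfaction} gives $z\models\Inv_1\land E$ iff $z^*\models\Inv_2\land E^*$. In particular, whenever $z\models E$ we get $z\models\Inv_1$ iff $z^*\models\Inv_2$. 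Finally, $\Init$, $\Bad$ and $\Tr$ do not mention $\bar a$, so they take the same truth value under $z$ and $z^*$.

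With this in hand the three conditions follow directly. For~(\ref{eq:gen_inv:init}): if $z^*\models\Init\land E^*$, its canonical extension satisfies $\Init\land E$, so~(\ref{eq:gen_inv:init}) for $\Inv_1$ gives $z\models\Inv_1$, whence $z^*\models\Inv_2$. For~(\ref{eq:gen_inv:safe}): if $z^*\models\Inv_2\land E^*$ then $z\models\Inv_1\land E$, so~(\ref{eq:gen_inv:safe}) for $\Inv_1$ gives $z\models\neg\Bad$, and since $\Bad$ is over $\Vars$ we conclude $z^*\models\neg\Bad$. For~(\ref{eq:gen_inv:cons}): take $z^*$ over the doubled vocabulary $\Vars\cup\bar a^*\cup\Vars'\cup(\bar a^*)'$ satisfying the antecedent $\Inv_2\land E^*(\bar v,\bar a^*)\land\Tr\land E^*(\bar v',(\bar a^*)')$; extend it canonically on both the unprimed and the primed copies of the auxiliary variables (this is consistent, as the defining equations of $a_n$ and $a_n'$ involve disjoint variables) to obtain $z$. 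Applying the key observation to the unprimed copy yields $z\models\Inv_1\land E(\bar v,\bar a)$, and applying it to the primed copy (using that substitution commutes with priming, so the primed $\Inv_2$ equals $\Inv_1(\bar v',\bar a')[a_n'\gets l_1'\star l_2']$) yields $z\models E(\bar v',\bar a')$; together with $z\models\Tr$ this is the antecedent of~(\ref{eq:gen_inv:cons}) for $\Inv_1$, so $z\models\Inv_1(\bar v',\bar a')$, and one further application of the observation to the primed copy gives that $z^*$ satisfies the primed $\Inv_2$, as required.

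The step I expect to be the main obstacle is the consecution case~(\ref{eq:gen_inv:cons}): Lemma~\ref{lemma:satisfaction} is stated for a single copy of the variables, whereas consecution constrains $E(\bar v,\bar a)$ and $E(\bar v',\bar a')$ simultaneously, so $a_n$ must be eliminated together with its primed counterpart $a_n'$. What makes this routine rather than genuinely difficult is that the two defining equations range over disjoint variable sets, so the canonical extension can be performed independently on each copy and Lemma~\ref{lemma:satisfaction} can simply be invoked twice. The only other point that needs care is the well-formedness remark from the first paragraph — that $l_1,l_2$ do not depend on $a_n$ — which is exactly what guarantees that $\Inv_2$ and $E^*$ are meaningful objects over $\Vars\cup\bar a^*$ in the first place.
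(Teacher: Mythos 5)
Your proposal is correct and follows exactly the same route as the paper: define $\Inv_2 = \Inv_1[a_n\gets l_1\star l_2]$ and verify Equations~(\ref{eq:gen_inv:init})--(\ref{eq:gen_inv:safe}) via Lemma~\ref{lemma:satisfaction}; the paper gives only this as a sketch, whereas you fill in the details (the canonical extension of assignments, and the double application of the lemma to the unprimed and primed copies for consecution), which is precisely what the sketch leaves implicit.
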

\begin{proof}[Sketch]
    Assume $a_n \liff l_1\star l_2$ where $\star\in \{\land,\lxor\}$. Let $E^*(\Vars,\bar a^*)$ be the auxiliary circuit for $\bar a^*$, and let $E(\Vars, \bar a) = E^*(\Vars,\bar a^*)\land (a_n \liff l_1\star l_2)$ be the auxiliary circuit for $\bar a$. Let us define $\Inv_2(\Vars,\bar a^*) = \Inv_1(\Vars,\bar a)[a_n\gets l_1\star l_2]$. We need to show that $Inv_2$ is a generalized safe inductive invariant, namely that Equations~\ref{eq:gen_inv:init}-\ref{eq:gen_inv:safe} hold for $Inv_2$. 
    This can be shown using Lemma~\ref{lemma:satisfaction}.
    \qed
    
\end{proof}

\begin{lemma}\label{lemma:generalized_inv_to_inv}
    If $\Inv_g(\Vars,\bar a)$ is a generalized safe inductive invariant w.r.t. $T$, then there exists a safe inductive invariant $\Inv(\Vars)$ w.r.t. $T$.
\end{lemma}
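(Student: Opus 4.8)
The plan is to strip the auxiliary variables off one at a time, invoking Lemma~\ref{lemma:av_elimination} repeatedly, and then to observe that a generalized safe inductive invariant with no auxiliary variables left is nothing but an ordinary safe inductive invariant. Concretely, I would prove the statement by induction on $n = |\bar a|$, the number of auxiliary variables over which $\Inv_g$ is expressed.

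For the base case $n = 0$: here $\bar a = \emptyset$, so its auxiliary circuit $E(\Vars,\bar a)$ is the empty conjunction $\top$. Plugging $E(\Vars,\bar a) = \top$ (and $E(\bar v',\bar a') = \top$) into conditions~\ref{eq:gen_inv:init}--\ref{eq:gen_inv:safe} of Definition~\ref{def:gen_inv}, they collapse to $\Init(\Vars)\limp \Inv_g(\Vars)$, $\Inv_g(\Vars)\land\Tr(\Vars,\Vars')\limp\Inv_g(\Vars')$ and $\Inv_g(\Vars)\limp\neg\Bad(\Vars)$ --- exactly the defining conditions of a safe inductive invariant from the Preliminaries. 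Hence $\Inv := \Inv_g$ is the desired invariant.

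For the inductive step I would assume the claim for $n-1$ auxiliary variables and take $\Inv_g(\Vars,\bar a)$ generalized safe inductive with $|\bar a| = n \ge 1$. By Definition~\ref{def:aux_circ} the highest-indexed auxiliary variable $a_n$ satisfies $a_n\liff l_1\star l_2$ with $\star\in\{\land,\lxor\}$ and $l_1,l_2\in Lits(\Vars\cup\{a_1,\dots,a_{n-1}\})$, so the auxiliary circuit factors as $E(\Vars,\bar a) = E^*(\Vars,\bar a^*)\land(a_n\liff l_1\star l_2)$ with $\bar a^* = \bar a\setminus\{a_n\}$ still a legitimate auxiliary-variable set carrying the well-formed auxiliary circuit $E^*$. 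Lemma~\ref{lemma:av_elimination} then yields a generalized safe inductive invariant over $\Vars\cup\bar a^*$, and applying the induction hypothesis to it produces a safe inductive invariant $\Inv(\Vars)$ w.r.t. $T$, as required.

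I do not anticipate a genuine obstacle: all the real work is carried by Lemma~\ref{lemma:av_elimination} (which itself rests on the substitution equivalence of Lemma~\ref{lemma:satisfaction}). The one point I would state explicitly is that the elimination must proceed top-down --- removing $a_n$, then $a_{n-1}$, and so on --- since only the highest-indexed variable can be dropped without leaving a definition in $E$ that refers to a deleted variable; the layered indexing built into Definition~\ref{def:aux_circ} is precisely what makes each successive application of the lemma legitimate, so the induction goes through cleanly.
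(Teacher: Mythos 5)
Your proof is correct and follows essentially the same route as the paper: induction on $|\bar a|$, with the base case collapsing to the ordinary definition of a safe inductive invariant and the inductive step removing the last (highest-indexed) auxiliary variable via Lemma~\ref{lemma:av_elimination}. Your explicit remark that the ordering in Definition~\ref{def:aux_circ} is what licenses each elimination step matches the paper's parenthetical ``recall that $\bar a$ is ordered.''
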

\begin{proof}
    By induction on the size of $\bar a$. If $|\bar a|=0$ then $\bar a = \emptyset$ and $\Inv_g(\Vars, \bar a)$ is a formula over $\Vars$. Therefore, $Inv(\Vars) = \Inv_g(\Vars,\bar a)$ is a safe inductive invariant.

    For the induction step, assume that for a generalized inductive invariant  $\Inv_g^*(\Vars,\bar a^*)$, where $|\bar a^*|=n$, there exists a safe inductive invariant $\Inv^{\bar a^*}(\Vars)$. Next, assume $\Inv_g(\Vars,\bar a)$ is a generalized safe inductive invariant where $\bar a = \{a_1,\ldots,a_{n+1}\}$. Clearly, $|\bar a|=n+1$. Note that $a_{n+1}\in\bar a$ is the last auxiliary variable in $\bar a$ (recall that $\bar a$ is ordered). Let $\bar a^* = \bar a \setminus\{a_{n+1}\}$, by Lemma~\ref{lemma:av_elimination} there exists a generalized safe inductive invariant $\Inv_g^*(\Vars,\bar a^*)$.
    By the induction hypothesis, there exists a safe inductive invariant $\Inv^{\bar a^*}(\Vars)$.\qed
\end{proof}

Since \PdrEV maintains an inductive trace like \Pdr, we also generalize the definition of an inductive trace:
\begin{definition}[Generalized Inductive Trace]
    A \emph{generalized inductive trace} $\vG = [G_0,\ldots,G_N]$ satisfies the following properties:
    \begin{align}
  \Init(\Vars) \land E(\Vars, \bar a) &\limp G_0(\Vars,\bar a) \\ 
  \forall 0 \leq i < N \cdot G_i(\Vars,\bar a)\land E(\Vars,\bar a) \land
  \Tr(\Vars,\Vars')\land E(\Vars',\bar a') &\limp G_{i+1}(\Vars', \bar a')
\end{align}
\end{definition}

\begin{theorem}
    $T$ is SAFE iff it admits a safe closed \emph{generalized} trace.
\end{theorem}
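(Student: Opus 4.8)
The plan is to reduce this statement to Theorem~\ref{thm:closed-safety} (the analogous statement for ordinary traces) by moving back and forth between generalized traces (over $\Vars\cup\bar a$, with the auxiliary circuit $E$) and ordinary traces (over $\Vars$ only). The backward direction — if $T$ admits a safe closed generalized trace then $T$ is SAFE — is the substantive one, and I would handle it by eliminating the auxiliary variables one at a time, exactly mirroring the strategy already used in Lemmas~\ref{lemma:av_elimination} and~\ref{lemma:generalized_inv_to_inv}, but now applied uniformly to every frame of the trace rather than to a single invariant.

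Concretely, for the backward direction I would first prove an ``auxiliary variable elimination'' statement for traces: if $\vG = [G_0,\ldots,G_N]$ is a safe closed generalized trace with auxiliary set $\bar a = \{a_1,\ldots,a_n\}$ and $a_n \liff l_1 \star l_2$, then substituting $a_n \gets l_1 \star l_2$ in every frame, i.e. setting $G_i^* := G_i[a_n\gets l_1\star l_2]$, yields a safe closed generalized trace $\vG^*$ over $\bar a^* = \bar a\setminus\{a_n\}$ with auxiliary circuit $E^*$. Each required property transfers via Lemma~\ref{lemma:satisfaction}: for the consecution condition $G_i\land E \land \Tr \land E' \limp G_{i+1}'$ one applies Lemma~\ref{lemma:satisfaction} to the pre-state copy and to the post-state copy (the primed auxiliary variable $a_n'$ is eliminated by the primed definition $l_1'\star l_2'$), turning the entailment over $\Vars\cup\bar a$ into the corresponding entailment over $\Vars\cup\bar a^*$; the initiation condition and safety condition $G_i\land E\limp\neg\Bad$ transfer the same way; and closedness, being the purely syntactic/semantic condition $\exists i\cdot G_i\limp\bigvee_{j<i}G_j$ (read semantically, together with $E$), is preserved because substitution commutes with $\land,\lor,\neg$ and Lemma~\ref{lemma:satisfaction} again matches models on both sides. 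Iterating this elimination $n$ times (induction on $|\bar a|$, base case $\bar a=\emptyset$ where $E\equiv\top$ and $\vG$ is already an ordinary safe closed trace) produces a safe closed ordinary trace over $\Vars$, and then Theorem~\ref{thm:closed-safety} gives that $T$ is SAFE.

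For the forward direction, if $T$ is SAFE then by Theorem~\ref{thm:closed-safety} it admits a safe closed ordinary trace $\vF = [F_0,\ldots,F_N]$ over $\Vars$. I would simply exhibit $\vG := \vF$, viewed as a generalized trace with $\bar a = \emptyset$ and $E(\Vars,\bar a) = \top$: all three generalized conditions (initiation, consecution, safety) degenerate to the ordinary ones, which hold, and closedness is identical. Thus $T$ admits a safe closed generalized trace trivially.

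I expect the main obstacle to be bookkeeping in the consecution step of the elimination argument: one must substitute $a_n$ in the unprimed copy and $a_n'$ in the primed copy simultaneously and invoke Lemma~\ref{lemma:satisfaction} twice (once per time-frame), being careful that the ordering constraint on $\bar a$ guarantees $l_1,l_2\in Lits(\Vars\cup\{a_1,\ldots,a_{n-1}\})$ so that eliminating the \emph{last} variable never re-introduces a dependence on itself, and that the witness assignments $z$ and $z^*$ of Lemma~\ref{lemma:satisfaction} can be built consistently for both the pre- and post-state valuations appearing in a satisfying assignment of the consecution formula. Handling the closedness condition also requires a small remark that ``closed'' for a \emph{generalized} trace should be read modulo $E$ (i.e. $G_i\land E\limp(\bigvee_{j<i}G_j)$), after which its preservation under substitution is again immediate from Lemma~\ref{lemma:satisfaction}. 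None of this is deep, but it is the place where the proof could go wrong if the primed/unprimed substitutions are not tracked carefully.
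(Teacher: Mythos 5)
Your proposal is correct, and the forward direction coincides with the paper's (view an ordinary safe closed trace as a generalized one with $\bar a = \emptyset$ and $E = \top$). For the backward direction, however, you take a genuinely different route. The paper first uses closedness to extract a single frame $G_j$ with $G_j = G_{j+1}$, observes that this frame is a \emph{generalized} safe inductive invariant, and then invokes Lemma~\ref{lemma:generalized_inv_to_inv} to eliminate the auxiliary variables from that one formula, yielding an ordinary safe inductive invariant and hence safety. You instead eliminate the auxiliary variables from \emph{every} frame of the trace first, arguing that each trace condition (initiation, consecution, safety, closedness) is preserved by the substitution $a_n \gets l_1 \star l_2$ via Lemma~\ref{lemma:satisfaction}, and only then collapse the resulting ordinary safe closed trace to safety via Theorem~\ref{thm:closed-safety}. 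Both arguments rest on the same elimination machinery; the paper's order of operations is cheaper because it reuses Lemma~\ref{lemma:generalized_inv_to_inv} verbatim and never has to restate elimination for the two-time-frame consecution condition or worry about what ``closed'' means for a generalized trace, whereas your order requires proving a new trace-level elimination lemma (with the careful double application of Lemma~\ref{lemma:satisfaction} to the primed and unprimed copies that you correctly flag). What your route buys is a slightly stronger intermediate fact --- that the \emph{entire} generalized trace projects to an ordinary safe closed trace --- which the paper does not need but which could be useful if one wanted to certify the bounded proof, not just the invariant.
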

\begin{proof}
    Assume $T$ is SAFE. By Theorem~\ref{thm:closed-safety} $T$ admits a safe closed trace $F$. $F$ is a generalized trace where $\bar a = \emptyset$ and $E(\Vars, \bar a)=\top$. Hence, this direction is trivial.

    Assume $T$ admits a safe closed generalized trace $G$ of size $N$. Since $G$ is closed, there exists $1\leq j < N$ such that $G_j = G_{j+1}$, and $G_j$ is a generalized safe inductive invariant. By Lemma~\ref{lemma:generalized_inv_to_inv} there exists a safe inductive invariant for $T$, and hence $T$ is SAFE.
    \qed
\end{proof}

Note that relative induction and implication checks are adjusted accordingly. For example, checking if a clause $c\in G_i$ can be propagated to $G_{i+1}$ is done by checking if the following formula is valid $$G_i(\Vars,\bar a)\land E(\Vars,\bar a)\land \Tr(\Vars,\Vars')\land E(\Vars',\bar a')\limp c(\Vars',\bar a')$$

\PdrEV maintains a generalized trace $\vG$. Its main loop is similar to the main loop of \Pdr and appears in Algorithm~\ref{alg:pdrer_mainLoop}. The only difference in the main loop is that $\vG$ is periodically re-encoded by applying the extension rule and adding auxiliary variables (as described in Section~\ref{sec:apply_er}). The frequency of re-encoding is based on a heuristic captured by the parameter $\Delta$ (lines~\ref{alg:mainer:heuristic_begin}-\ref{alg:mainer:heuristic_end}). As we mentioned above, this change alone can reduce the number of clauses in $\vG$ but this reduction does not necessarily translate to better performance overall. 
Next, we describe the algorithmic modifications that separate \PdrEV from \Pdr.

There are three main differences between \PdrEV and \Pdr due to the fact that $\vG$ is over $\Vars\cup \bar a$:
\begin{enumerate*}[label=(\roman*)]
    \item clause redundancy checks cannot be syntactic;
    \item auxiliary variables are explicitly used for generalization; and
    \item clauses that include auxiliary variables and fail to propagate must be handled differently.
\end{enumerate*}

\subsection{Redundant Clauses}

In \Pdr, when a new clause $c$ is added to $F_i$, a subsumption check is performed. Every clause $d\in D_j$, for $0< j\leq i$ , subsumed by $c$ is removed. Since in \Pdr $\vF$ is only over state variables, this subsumption check is performed by checking if $c\subseteq d$ holds. However, for a generalized trace like the one maintained by \PdrEV, this simple check does not suffice. For example, consider the clauses $c_1 = y\lor l_1$ and $c_2 = a \lor l_1\lor l_2$, where $y$ is an auxiliary variable defined as $y\liff a\land b$. Clearly, $c_1\limp c_2$, but a simple syntactic subsumption check fails to identify this fact. Hence, a semantic implication check must be performed. In order to perform efficient implication checks we use Binary Decision Diagrams (BDDs)~\cite{DBLP:reference/mc/Bryant18}.

Algorithm~\ref{alg:imp} presents how \PdrEV checks for implication between two clauses. It starts by performing the standard (syntactic) subsumption check (line~\ref{alg:implication:subsumption}). Then, it checks if the two clauses contain auxiliary variables (line~\ref{alg:implication:no_aux}). If they do not contain auxiliary variables and the initial subsumption check failed, it returns false. 
If the clauses contain auxiliary variables, the algorithm checks if there exists a literal $l\in c_1$ such that its COI does not intersect the COI of $c_2$ (lines~\ref{alg:implication:coi_begin}-\ref{alg:implication:coi_end}). 
Note that if there exists such a literal $l\in c_1$, an assignment that satisfies $c_1$ but not $c_2$ can be constructed. Hence implication does not hold and the algorithm returns false. Lastly, if all syntactic checks fail, BDDs are used for the implication check. If the BDD representing $c_1\limp c_2$ is evaluated to $\top$, the algorithm returns true; otherwise, it returns false.
Note that a clause over $\Vars \cup \bar a$ corresponds to a Boolean function over $\Vars$. Hence, BDD construction for such a clause may be computationally expensive. Since implication checks are performed frequently, \ImpEV caches BDDs to avoid reconstructing the same BDD multiple times.

We emphasize that the use of BDDs in this case is crucial. As it turns out, trying to use a \satsolver for these checks results in decreased performance.

\begin{figure}[h!]
\vspace{-10pt}
\centering






\begin{adjustbox}{scale=0.7}
    \begin{minipage}[t]{9.4cm}





\begin{algorithm}[H]
\caption{\GenEV}\label{alg:gen}
\SetKw{Continue}{continue}
\KwIn{$\varphi(\Vars)$: cube, i: int}
\KwOut{$c(\Vars,\bar a)$: clause}
\Requires{$G_{i-1}(\Vars,\bar a)\land E(\Vars,\bar a)\land\Tr(\Vars,\Vars')\land E(\Vars',\bar a')\limp\neg\varphi'$}
\Ensures{$G_{i-1}(\Vars,\bar a)\land E(\Vars,\bar a)\land c\land\Tr(\Vars,\Vars')\land E(\Vars',\bar a')\limp c'$}
$c(\bar v) = \textsc{IndGen}(i, \neg\varphi)$\;\label{fig:alg:gener:ind_gen}

\For {$a \in \bar a$ where $a \liff l_1 \star l_2$} {
    $d = None$\;
    \If {$(\star = \land)$} {\label{alg:gener:and_begin}
        \lIf{$(l_1 \in c) \land (l_2 \not\in c)$} {
            $d = (c \setminus \{ l_1 \}) \cup \{ a \}$
        }
        \lElseIf {$(l_1 \not\in c) \land (l_2 \in c)$} {
            $d = (c \setminus \{ l_2 \}) \cup \{ a \}$\label{alg:gener:and_end}
        }
    } 
    \If {$(\star = \lxor)$} {\label{alg:gener:xor_begin}
        \If {$(l_1 \in c) \land (\neg l_2 \in c)$} {
            $d = (c \setminus \{l_1,\neg l_2\}) \cup \{ \neg a \}$
        }
        \ElseIf{$(\neg l_1 \in c) \land (l_2 \in c))$} {
            $d = (c \setminus \{\neg l_1, l_2\}) \cup \{ \neg a \}$
        }
        \ElseIf{$(l_1 \in c) \land (l_2 \in c))$} {
            $d = (c \setminus \{ l_1, l_2\}) \cup \{ a \}$
        }
        \ElseIf{$(\neg l_1 \in c) \land (\neg l_2 \in c))$} {
            $d = (c \setminus \{\neg l_1, \neg l_2\}) \cup \{  a \}$ \label{alg:gener:xor_end}
        }
    }

    \lIf{$d = None$} {\Continue}

    \tcc{Is $d$ inductive relative to $G_{i-1}$?}
    \If{$Init(\Vars) \land E(\Vars, \bar a) \limp d(\Vars,\bar a)$} {\label{alg:genev:rel_ind_begin}
        \If{$(G_{i-1}(\Vars,\bar a)\land E(\Vars,\bar a) \land d(\Vars,\bar a)) \land Tr(\Vars, \Vars')\land E(\Vars',\bar a')  \limp d(\Vars',\bar a')$} {
            $c = d$\;\label{alg:genev:rel_ind_end}
        }
    }
}
\Return $c$\;
\end{algorithm}
\end{minipage}
\end{adjustbox}
\hfill
\begin{adjustbox}{scale=0.7}
    \begin{minipage}[t]{7.5cm}
      \begin{algorithm}[H]
\caption{\PushEV}\label{alg:pusher}
\KwIn{$c$ : a clause to propagate, i: int}
\Requires{$c\in G_i$ and $\vG^* = \vG$ where $\vG^*$ holds the initial state of $\vG$ for the post-condition}
\Ensures{$\vG$ is a generalized trace s.t. $\forall j\leq|\vG|\cdot G_j\limp G^*_j$}
\SetKw{Continue}{continue}
\SetKw{Break}{break}
$\mathcal{A} = \emptyset$\;
    \While {True} {
        \eIf{$G_i(\Vars,\bar a)\land E(\Vars,\bar a) \land Tr(\Vars, \Vars')\land E(\Vars',\bar a')  \limp c(\Vars',\bar a')$} {\label{fig:alg:pusher:push}
            $\vG.\textsc{Insert}(i + 1 ,c)$\;
            \Break \;
        } {
            \tcc{extract assignment from \satsolver}
            $\varphi(\bar v', \bar a')$ = \textsc{GetAssignment()}\;
            $\mathcal{A} = \mathcal{A} \cup \{ \varphi \}$\label{fig:alg:pusher:cube}
        }
        $found = false$\;
        \For {$a \in c \text{ where } var(a) \in \bar a$} {
            $C = c.\textsc{Expand}(a)$\;\label{fig:alg:pusher:expand}
            choose $d\in C$ such that $(\Lor\mathcal{A})\not\models \neg d'$\;\label{fig:alg:pusher:choose}
            \If{$d \neq Null$} {
                $c = d$\;\label{fig:alg:pusher:update}
                $found = true$\;
                \Break\;
            }
        

        }
        \lIf {$found = false$} \Break
    }
\end{algorithm}  
    \end{minipage}
\end{adjustbox}
\vspace{-15pt}
\end{figure}

\subsection{Generalization with Auxiliary Variables}

Auxiliary variables are introduced when $\vG$ is re-encoded. Assume that $\bar a$ is a set of auxiliary variables added by re-encoding $\vG$. Since an inductive invariant can now be expressed in terms of $\Vars\cup\bar a$, it is important to allow \PdrEV to generate clauses over $\Vars\cup\bar a$. In \Pdr, new clauses are added to the trace by a generalization procedure. Next, we present \GenEV, which is the generalization procedure used by \PdrEV that allows it to generate clauses over $\Vars\cup \bar a$. 

Recall that \Pdr iteratively tries to prove that a given state\footnote{In practice, it can be a set of states.} is unreachable in $i$ steps, for some $i$. Such a state is referred to as a \emph{proof obligation} at level $i$ and it is represented by a cube and the level as $(\varphi(\Vars), i)$. \Pdr invokes generalization when it identifies that a proof obligation is unreachable. The generalization procedure starts from $\neg\varphi(\Vars)$ as a candidate to be added to the trace at level $i$. 
Then, it tries to find a new clause $c(\Vars)\subseteq \neg\varphi(\Vars)$ by dropping literals from $\neg\varphi$ such that $c$ is inductive relative to frame $i-1$ in the trace. This process is usually referred to as \emph{inductive generalization}~\cite{DBLP:conf/fmcad/BradleyM07,DBLP:conf/vmcai/Bradley11}. The resulting clause $c$ is then added to the trace at level $i$.

\PdrEV does not change the procedure that generates proof obligations. Hence, \GenEV also starts from a proof obligation $(\varphi(\Vars), i)$ that is known to be unreachable. As a first step, \GenEV also uses inductive generalization. Assume that $c(\Vars)\subseteq \neg\varphi(\Vars)$ is the result of inductive generalization. \GenEV tries to generalize $c$ further, by replacing literals that appear in $c$ with literals over auxiliary variables. Recall that a clause that includes auxiliary variables represents a \emph{set of clauses}. Hence, such an operation is a \emph{generalization} as it produces a \emph{stronger} clause.

\begin{example}\label{example:gener}
    Assume $c = (l_1\lor l_2\lor l_3)$ and there exists an auxiliary variable $y$ such that $y \liff l_3\land l_{4}$. Now assume \GenEV replaces $l_3$ with $y$ resulting in $d = (l_1\lor l_2\lor y)$. Since $((y\liff l_3\land l_4)\land d) \limp c$, in the context of \PdrEV, $d$ is a generalization of $c$.
\end{example}

The generalization procedure \GenEV, used by \PdrEV, is presented in Algorithm~\ref{alg:gen}. The input to \GenEV is an unreachable proof obligation $(\varphi(\Vars), i)$. First, inductive generalization is used in order to generate a clause $c\subseteq \neg\varphi$ (line~\ref{fig:alg:gener:ind_gen}). Then, in order to replace literals in $c$ that are over $\Vars$ with literals over $\bar a$, \GenEV iterates over $\bar a$ in order, starting from $a_1$. Recall that for $a_i,a_j\in\bar a$, if $j > i$ then $a_i$ may appear in the COI of $a_j$ but $a_j$ cannot be in the COI of $a_i$ (see Definition~\ref{def:aux_circ}). For each auxiliary variable $a\liff l_1\star l_2$, \GenEV performs the following. If $\star = \land$ then it checks if \emph{only one} of the literals $l_1$ or $l_2$ are in $c$ (lines~\ref{alg:gener:and_begin}-\ref{alg:gener:and_end}). Assume, w.l.o.g., that $l_1\in c$ and $l_2\not\in c$. Then, \GenEV creates a clause $d$, which is similar to $c$ but with $l_1$ substituted by $a$. If $\star = \lxor$ similar checks are performed (lines~\ref{alg:gener:xor_begin}-\ref{alg:gener:xor_end}). For example, if $l_1\in c$ and $l_2\in c$, then $l_1$ and $l_2$ are substituted by $a$. In case such a substitution is performed, \GenEV then checks if the resulting clause $d$ is inductive relative to $G_{i-1}$. If it is, then $c$ is updated to be $d$ (lines~\ref{alg:genev:rel_ind_begin}-\ref{alg:genev:rel_ind_end}), and the loop continues with the next auxiliary variable. In case $d$ is not inductive relative to $G_{i-1}$, generalization through the current auxiliary variable failed and $c$ remains unchanged. Here too the loop continues with the next auxiliary variable.

Note that this process substitutes literals over $\Vars\cup\bar a$ with literals over $\bar a$ in a \emph{bottom-up} fashion. A similar procedure can be performed \emph{top-down}, however, our experiments show that the bottom-up procedure performs better in practice.

\subsection{Fractional Propagation}

One of the key procedures in \Pdr is \emph{propagation}. During propagation \Pdr tries to ``push'' a clause to a higher level in the trace. Since \PdrEV uses a generalized trace $\vG$, a clause in some frame $G_i$ may include auxiliary variables. Let $c(\Vars,\bar a)$ be such a clause. Since $c$ is over $\Vars\cup\bar a$, it represents a \emph{set of clauses} over $\Vars$. Hence, trying to propagate $c$ is akin to propagating a set of clauses simultaneously. If one clause in that set cannot be propagated, $c$ itself cannot be propagated. Intuitively, this reduces the likelihood of such a clause to be propagated. 

\begin{example}\label{example:pusher}
    Consider the clause $c = (y\lor A)$ where $c\in G_i$ and $y$ is an auxiliary variable such that $y\liff l_1\lxor l_2$. In this case $c$ represents the clauses $c_1 = (l_1 \lor l_2 \lor A)$ and $c_2 = (\neg l_1\lor \neg l_2\lor A)$. Assume that

\begin{align}
G_i(\Vars,\bar a)\land E(\Vars,\bar a)\land\Tr(\Vars,\Vars')\land E(\Vars',\bar a') & \not\limp c(\Vars',\bar a')\label{eq:push:c_fail} \\
G_i(\Vars,\bar a)\land E(\Vars,\bar a)\land\Tr(\Vars,\Vars')\land E(\Vars',\bar a') & \not\limp c_1(\Vars',\bar a') \\ 
G_i(\Vars,\bar a)\land E(\Vars,\bar a)\land\Tr(\Vars,\Vars')\land E(\Vars',\bar a') & \limp c_2(\Vars',\bar a')
\end{align}

    The clause $c$ cannot be propagated because $c_1$ cannot be propagated. However, $c_2$ can be propagated to $G_{i+1}$. If the propagation procedure does not treat auxiliary variables specifically, it can miss the fact $c_2$ can be propagated. As a result, \PdrEV is likely to relearn $c_2$ in higher levels during \textsc{MkSafe}, which is less efficient than propagation.
\end{example}


\PushEV is designed to specifically take into account auxiliary variables during propagation. Unlike the propagation procedure in \Pdr, if \PushEV fails to propagate a clause $c(\Vars,\bar a)$, it tries to identify a subset of the set of clauses $c$ represents (a ``fraction'') that can be propagated. In order to understand the intuition behind \PushEV, let us reconsider Example~\ref{example:pusher}. From Equation~\ref{eq:push:c_fail} we can conclude that $G_i(\Vars,\bar a)\land E(\Vars,\bar a)\land\Tr(\Vars,\Vars')\land E(\Vars',\bar a') \land \neg c(\Vars',\bar a')$ is satisfiable. Let us denote by $\varphi(\Vars',\bar a')$ the cube such that $\varphi\models\neg c(\Vars', \bar a')$. Such a cube can be extracted from the satisfying assignment that shows why $c$ cannot be propagated. 
Since $\varphi \models \neg c(\Vars',\bar a')$, it is easy to show that $\varphi \models (\neg c_1(\Vars',\bar a') \lor \neg c_2(\Vars',\bar a'))\land (y'\liff l_1' \lxor l_2')$. By that, $\varphi \models \neg c_1(\Vars',\bar a') \land (y'\liff l_1' \lxor l_2')$ or $\varphi \models \neg c_2(\Vars',\bar a')\land (y'\liff l_1' \lxor l_2')$. \PushEV uses such a cube (i.e. the assignment) to identify elements in the set of clauses $c$ represents, that cannot be propagated. 
For this example, let us assume that $\varphi \models \neg c_1(\Vars',\bar a') \land (y'\liff l_1' \lxor l_2')$ and $\varphi \not\models \neg c_2(\Vars',\bar a') \land (y'\liff l_1' \lxor l_2')$. In this case, \PushEV determines $c_1$ cannot be propagated and explicitly checks if $c_2$ can be propagated.

This exploration performed by \PushEV depends on the number of auxiliary variables a clause depends on. At each propagation attempt that fails \PushEV uses the satisfying assignment returned by the \satsolver to choose an auxiliary variable to expand (i.e. replace it by its definition), and eliminate a subset of clauses that cannot be propagated. Note that it may be possible for an assignment to rule out the entire set of clauses.

The procedure \PushEV, which implements fractional propagation, is described in Algorithm~\ref{alg:pusher}.
\PushEV receives as input a clause $c$ and the level $i$. \PdrEV enters a loop where at each iteration it tries to push $c$ to level $i+1$ (line~\ref{fig:alg:pusher:push}). If it succeeds, then the clause is added to the corresponding frame and the procedure terminates. If it fails, it extracts the assignment from the \satsolver and stores it in $\mathcal{A}$ (line~\ref{fig:alg:pusher:cube}). Then, it iterates over auxiliary literals in $c$. For such a literal $a$, \PushEV retrieves the set of clauses $c$ represents by expanding the definition of $a$ (line~\ref{fig:alg:pusher:expand}). Next, it chooses a clause $d$ in that set that is not ruled out by all previous assignments that were found during failed propagation attempts (line~\ref{fig:alg:pusher:expand}). If such a clause is found, $c$ is updated to $d$ and now represents a ``fraction'', and the algorithm starts a new iteration, trying to propagate the fraction. If no such clause is found for all auxiliary literals in $c$, the procedure terminates, without propagating any fraction of $c$.

We note that some implementation details are omitted for the simplicity of presentation. For example, if \PushEV finds a fraction clause that can be propagated, it makes sure that this clause is not already implied at higher levels. It is also important to note that this procedure does not necessarily identify all fractions that can be propagated.

\section{Experimental Evaluation}

\newcommand{\CUDD}{CUDD\xspace}
\newcommand{\CaDiCal}{CaDiCal\xspace}
\newcommand{\SAT}{SAT\xspace}
\newcommand{\BDD}{BDD\xspace}

We implemented both \Pdr and \PdrEV in a new tool from the ground up\footnote{\url{https://github.com/TechnionFV/CAV_2025_artifact}}
For SAT solving, we used \CaDiCal 2.0, a state-of-the-art open-source SAT solver. For \BDD operations, we used \CUDD, a well-established open-source \BDD library.

To isolate the impact of ER within \Pdr, we implemented \PdrEV as an extension of our \Pdr implementation, minimizing differences beyond the core algorithmic changes described in this paper. We used the \Pdr implementation in ABC \cite{abc}, a widely used open-source hardware synthesis and verification tool, as a baseline. In the following, \Pdr and \PdrEV refer to the implementation in our tool, while \Abc refers to ABC's \Pdr implementation.

\subsection{HWMCC}

We conducted a series of experiments comparing the performance of these three implementations on various Hardware Model Checking Competitions benchmarks (HWMCC); HWMCC’19/20/24, which include 317, 324 and 318 instances, respectively.
\footnote{HWMCC’24 officially includes 319 instances but one is malformed}.
All experiments were executed on machines with AMD EPYC 74F3 CPU and 32GB of memory, under a timeout of 3600 seconds.

Figure \ref{fig:table:hwmcc_results} summarizes the results of our experiments, where in all three sets \PdrEV outperforms both \Pdr and \Abc, particularly on SAFE instances. The difference is particularly apparent in HWMCC'24 where \PdrEV not only solves significantly more instances than \Pdr, but also achieves significantly more unique wins than the other solvers, and has a significantly lower average runtime.

\begin{figure}[h!]
\centering
\begin{adjustbox}{scale=0.8}
    
\begin{tabular}{c|c|c|c|c|c|c}
\hline
\hline
Set& Solver & \# Solved & \#UNSAFE & \#SAFE & \# Unique & Avg. Time \\
\hline

\multirow{3}{*}{HWMCC'19}
& \Abc & 212 & 27 & 185 & {\bf 2} & 1274.6  \\
& \Pdr & 233 & {\bf 40} & 193 & 1 & 1104.4  \\
& \PdrEV & {\bf 236} & {\bf 40} & {\bf 196} & {\bf 2} & { \bf 1082.4}  \\
\hline
& VB & 239 & 43 & 196 & & 1026.9 \\
\hline
\hline

\multirow{3}{*}{HWMCC'20}
& \Abc & 221 & 34 & 187 & 3 & 1274.4  \\
& \Pdr & 231 & {\bf 43 } & 188 & 4 & 1144.0  \\
& \PdrEV & {\bf 236} & 41 & {\bf 195} & {\bf 6} & {\bf 1095.4} \\
\hline
& VB & 243 & 46 & 197 & & 1038.4 \\
\hline
\hline

\multirow{3}{*}{HWMCC'24}
& \Abc & 157 & 27 & 130 & 3 & 1920.5  \\
& \Pdr & 175 & 33 & 142 & 0 & 1702.3 \\
& \PdrEV & \bf 188 & \bf 36 & \bf 152 & \bf 9 & \bf 1622.5 \\
\hline
& VB & 191 & 36 & 155 & & 1569.4 \\
\hline
\hline

\end{tabular}
\end{adjustbox}
\caption{\footnotesize HWMCC results, average runtime is in seconds}
\label{fig:table:hwmcc_results}
\vspace{-14pt}
\end{figure}

\paragraph{Runtime:} Figure \ref{fig:hwmcc_all} compares \PdrEV and \Pdr on the entire benchmark set (VB stands for Virtual Best). Runtime comparison for \PdrEV and \Pdr appears in Figure~\ref{fig:scatter_plot_all}. Instances where \PdrEV outperforms \Pdr appear below the parity diagonal, highlighting its advantage. The plot demonstrates that \PdrEV achieves substantial runtime improvements across numerous benchmarks while successfully solving a significant subset of problems that \Pdr fails to solve within the time limit. Some instances exhibit a faster runtime under \Pdr, which we associate with the computational overhead of identifying and using auxiliary variables, particularly those that ultimately prove ineffective for the proof. Overall, \PdrEV performs better than \Pdr as is evident in Figure~\ref{fig:cactus_19_20_24}. Note that \PdrEV solves 550 instances in 1500 seconds while \Pdr takes 2200 seconds to solve the same amount of instances. Moreover, we can see for the harder instances (runtime of over 500 seconds), \PdrEV has a clear advantage over \Pdr. Lastly, \PdrEV is very close to VB. This leads us to conclude that \PdrEV is overall better than \Pdr and in addition its performance does not degrade due to the use of ER. 

\begin{figure}[h!]
\vspace{-5pt}
    \centering
    \begin{subfigure}[t]{0.45\textwidth}
        \centering
        \begin{adjustbox}{scale=0.75}
        \includegraphics[width=0.9\textwidth]{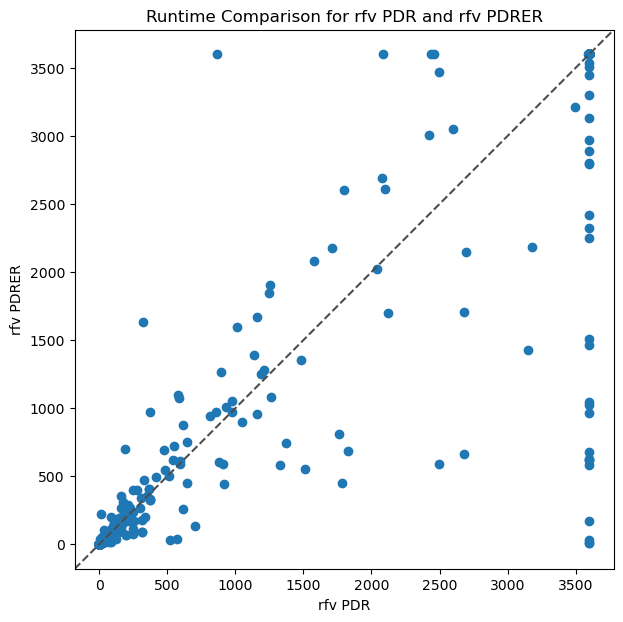}
        \end{adjustbox}
        \caption{Runtime comparison: \Pdr (X-axis) vs. \PdrEV (Y-axis).}
        \label{fig:scatter_plot_all}
    \end{subfigure}
    \begin{subfigure}[t]{0.45\textwidth}
        \centering
        \includegraphics[width=0.88\textwidth]{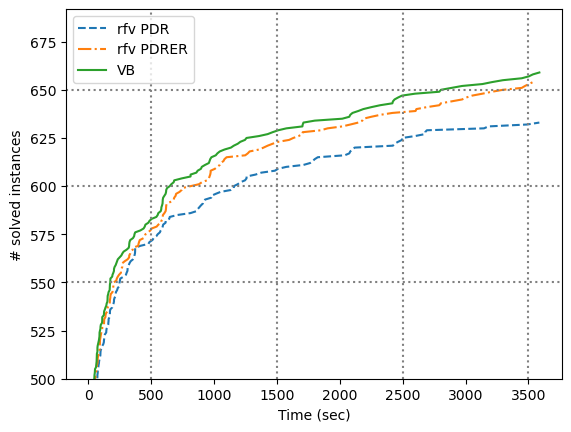}
        \caption{Y-axis represents number of solved instances. VB stands for Virtual Best.}
        \label{fig:cactus_19_20_24}
    \end{subfigure}
    \caption{\footnotesize HWMCC'19/20/24}\label{fig:hwmcc_all}
    \vspace{-18pt}
\end{figure}

\paragraph{Auxiliary Variables: } In order to evaluate the effect ER has on the proofs generated by \PdrEV, we first evaluate the auxiliary variables \PdrEV uses. Note that for trivial instances (i.e., those that are solved quickly), \PdrEV never re-encodes the trace and therefore it does not use auxiliary variables. Figure~\ref{fig:avs} presents the instances that use auxiliary variables and the average number of auxiliary variables used in each instance. Considering SAFE instances, the table presents how many invariants use auxiliary variables and what is the average number of auxiliary variables used to express the invariants. 

\begin{figure}[h!]
\centering
\begin{adjustbox}{scale=0.8}
    
\begin{tabular}{c|c|c|c|c|c|c|c}
\hline
\hline
\multirow{2}{*}{Set} & \multirow{2}{*}{\#Instances} & \#Instances &  Average & \multirow{2}{*}{\%\ $\lxor$} & Invariants & Average AVs & \multirow{2}{*}{\%\ $\lxor$}\\
& & using AVs & AVs &  & using AVs & in Invariants &\\
\hline
HWMCC'19 & 317 & 199 & 59 & 85\% & 94 & 23 & 78\% \\
\hline
HWMCC'20 & 324 & 207 & 56 & 85\% & 105 & 17 & 78\% \\
\hline
HWMCC'24 & 318 & 185 & 76 & 74\% & 61 & 39 & 73\%\\
\hline
\hline

\end{tabular}
\end{adjustbox}
\caption{\footnotesize Auxiliary Variables (AVs) added and used by \PdrEV. \%$\lxor$ stand for percentage of AVs using $\lxor$ in their definition.}
\label{fig:avs}
\vspace{-14pt}
\end{figure}

\paragraph{Proof Size: } In order to evaluate our conjecture regarding the ability of \PdrEV to produce shorter proofs due to the use of ER, we analyzed and compared the proofs generated by both \PdrEV and \Pdr. Figure~\ref{fig:invar_2024} and Figure~\ref{fig:invar_19_20} compare the size of the invariants for HWMCC'24 and HWMCC'19/20, respectively. As can be seen from the table in Figure~\ref{fig:table:hwmcc_results}, \PdrEV outperforms \Pdr considerably on HWMCC'24, while on HWMCC'19/20 the margin in favor of \PdrEV is smaller. A close analysis of the proofs generated by \PdrEV vs. those generated by \Pdr reveals that on HWMCC'24, the proofs generated by \PdrEV are \emph{shorter}. 

Figure~\ref{fig:trace_24} and Figure~\ref{fig:trace_19_20} compare the number of clauses in the trace maintained by \PdrEV ($\vG$) and \Pdr ($\vF$) for HWMCC'24 and HWMCC'19/20, respectively. The plots show a clear advantage for \PdrEV, which generates shorter bounded proofs for the majority of instances. The advantage is more apperant on HWMCC'24, which can explain why \PdrEV is considerably better on that set. This is also in accordance with the data that appears in Figure~\ref{fig:avs}. To further establish the advantage of \PdrEV, Figure~\ref{fig:po_24} and Figure~\ref{fig:po_19_20} compare the number of proof obligations generated by \PdrEV and \Pdr. This metric demonstrates that \PdrEV explores the state-space more efficiently.

\begin{figure}[h!]
    \centering
    \begin{subfigure}{0.32\textwidth}
        \centering
        \includegraphics[width=0.95\textwidth]{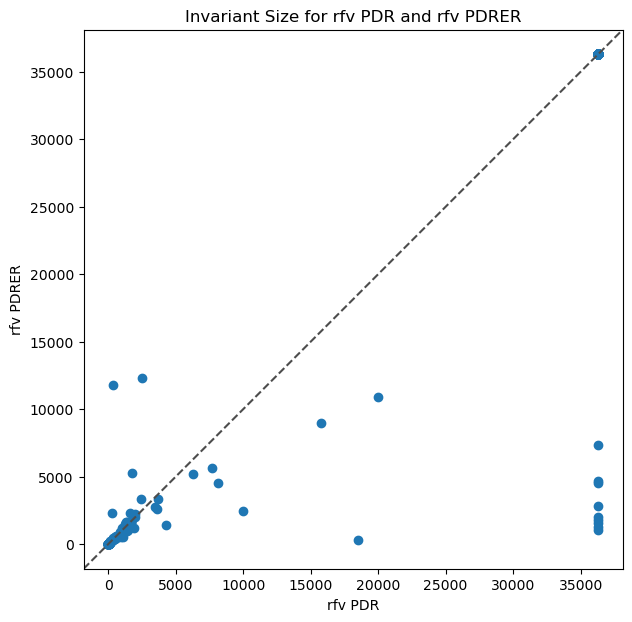}
        \caption{\centering Invariant Size}
        \label{fig:invar_2024}
    \end{subfigure}
    \begin{subfigure}{0.32\textwidth}
        \centering
        \includegraphics[width=1\textwidth]{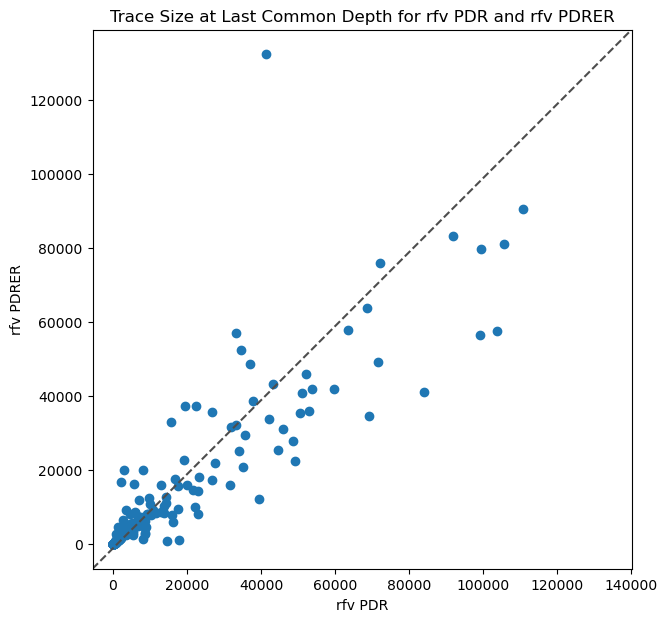}
        \caption{\centering Trace Size}
        \label{fig:trace_24}
    \end{subfigure}
    \begin{subfigure}{0.32\textwidth}
        \centering
        \includegraphics[width=0.93\textwidth]{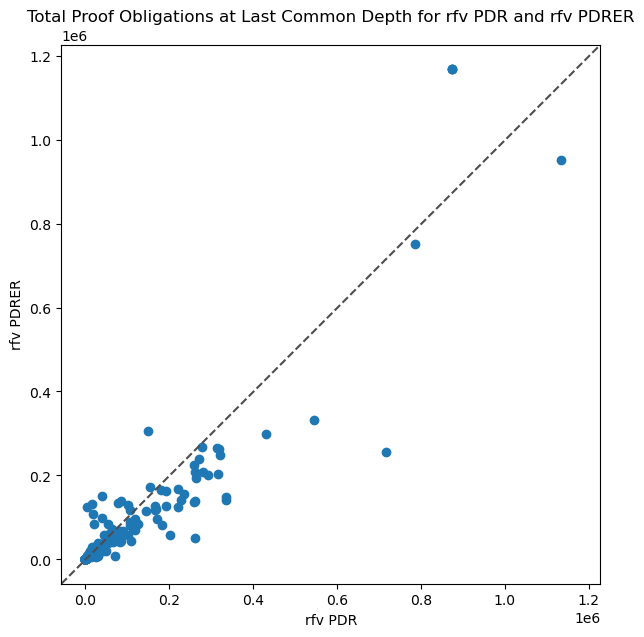}
        \caption{\centering Proof Obligations}
        \label{fig:po_24}
    \end{subfigure}
    \begin{subfigure}{0.32\textwidth}
        \centering
        \includegraphics[width=0.98\textwidth]{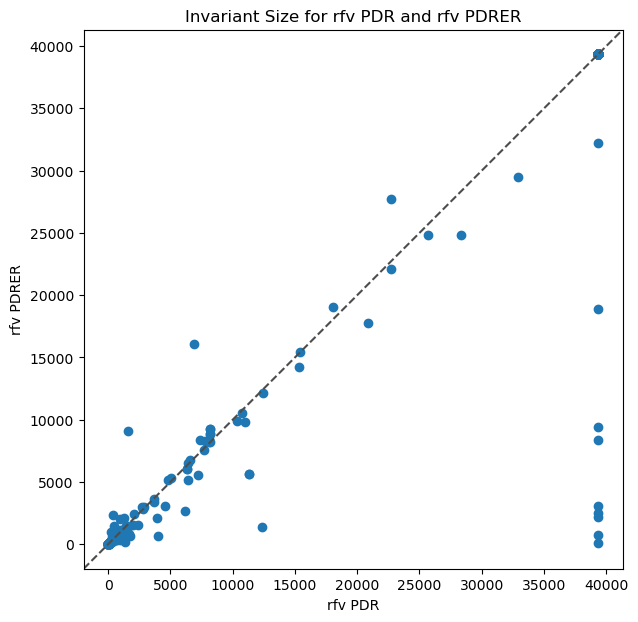}
        \caption{\centering Invariant Size}
        \label{fig:invar_19_20}
    \end{subfigure}
    \begin{subfigure}{0.32\textwidth}
        \centering
        \includegraphics[width=1\textwidth]{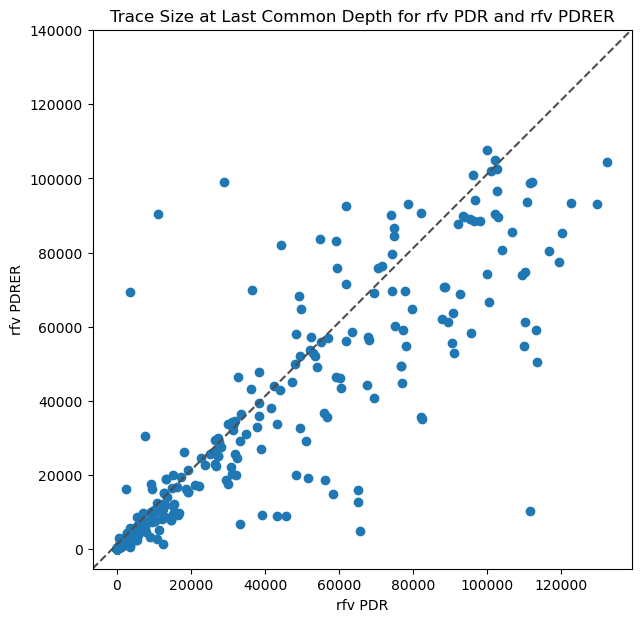}
        \caption{\centering Trace Size}
        \label{fig:trace_19_20}
    \end{subfigure}
    \begin{subfigure}{0.32\textwidth}
        \centering
        \includegraphics[width=0.95\textwidth]{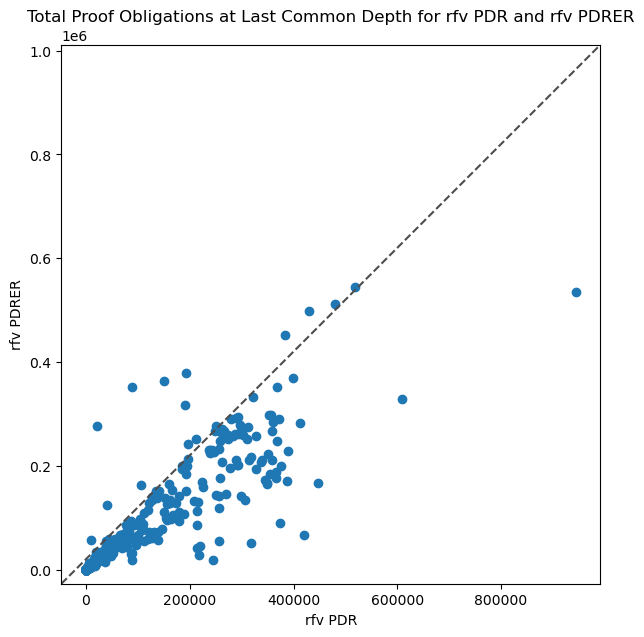}
        \caption{\centering Proof Obligations}
        \label{fig:po_19_20}
    \end{subfigure}
    \caption{\footnotesize Proof comparison: \PdrEV vs. \Pdr. X-axis is \Pdr across all plots. (\ref{fig:invar_2024})-(\ref{fig:po_24}) HWMCC'24, (\ref{fig:invar_19_20})-(\ref{fig:po_19_20}) HWMCC'19/20.}
    \vspace{-15pt}
\end{figure}


\subsection{The Effect of \ImpEV, \GenEV and \PushEV}

In order to understand the importance of the adaptions made to the different parts of \Pdr, we analyze the impact these modifications have on performance.

Table~\ref{tbl:configs} summarizes the contribution of each of these functions to \PdrEV on the entire benchamrk set (HWMCC'19/20/24). The letters ``G'', ``F'', and ``I'' stands for \GenEV, \PushEV and \ImpEV, respectively. The second row in the table (\PdrEV without ``G'', ``F'', and ``I'') represents \PdrEV without any modifications to core \Pdr functions. This version of \PdrEV only uses ER to re-encode the trace, and uses the standard \Pdr generalization, propagation and subsumption checks. As can be seen from the table, it solves the least number of instances and has the highest runtime. Adding \GenEV (\PdrEV-F-I) improves the number of solved instances and the runtime. Similarly, adding \PushEV and \ImpEV increases the number of solved instances. Interestingly, considering the entire benchmark, \PdrEV-F performs the best in terms of runtime. We believe that this is only due to the implementation and can be improved.

\begin{table}[h]
\centering
\begin{adjustbox}{scale=0.8}
\begin{tabular}{c|c|c}
\hline
\hline
Configuration & Number of Solved Instances & Average Runtime [s] \\
\hline
\Pdr & 639 & 1316 \\
\hline
\hline
\PdrEV-G-F-I & 652 & 1297 \\
\hline
\PdrEV-F-I & 655 & 1264 \\
\hline
\PdrEV-I & 657 & 1273 \\
\hline
\PdrEV-F & {\bf 660} & {\bf 1234} \\
\hline
\PdrEV & {\bf 660}  & 1283 \\
\hline
\hline
\end{tabular}
\end{adjustbox}
\caption{Performance with and without \PdrEV modifications.}
\label{tbl:configs}
\vspace{-14pt}
\end{table}

\paragraph{BDDs and \ImpEV:} As mentioned in Section~\ref{sec:pdrer}, syntactic subsumption checks are insufficient when extension variables are used. On this benchmark set, syntactic checks in \ImpEV are only sufficient in 40\% of the cases ($true$ is returned due to line~\ref{alg:implication:subsumption}). Other syntactic checks (lines~\ref{alg:implication:subsumption}-\ref{alg:implication:coi_end}) are successful in 94\% of the times, and BDDs are only used in 6\% of the calls to \ImpEV.

\subsection{Exponential Invariant in CNF}

The closest work to ours is~\cite{DBLP:conf/fmcad/DurejaGIV21}, where \Pdr is extended such that inductive invariants can be expressed in terms of $\Vars$ as well as logical connectives appearing in $\Tr$ (i.e., internal signals). Since the tool (\textsc{IC3-INN}) developed in~\cite{DBLP:conf/fmcad/DurejaGIV21} is not open-source and not available, we could not compare \PdrEV against it. However, we did evaluate \PdrEV on the examples described in Section III of \cite{DBLP:conf/fmcad/DurejaGIV21}. The examples admit only an exponential invariant in CNF and are therefore challenging for \Pdr, but can be solved by \textsc{IC3-INN}. We confirm that \PdrEV can solve these examples by finding compact inductive invariants, as expected. 

Furthermore, 
we identified instances where the inductive invariant cannot be expressed compactly by $\Vars$ and internal signals. For such examples, \textsc{IC3-INN} does not provide a benefit over \Pdr. Let us consider \texttt{vis\_arrays\_bufferAlloc} from HWMCC'19. It implements a buffer allocation protocol that works as follows. Assume that the buffer has $k$ cells, a user can either request to allocate a cell on the buffer, or free a specific cell, not necessarily in order. Whenever a cell is allocated or de-allocated, a counter is being incremented or decremented by 1, respectively. In addition, a bit-array, called \texttt{busy}, tracks which cells are used and which are free. A simple safety property for such a protocol states that the counter is always in the range $[0,k]$. While this protocol is fairly simple, \Pdr struggles with it since the invariant requires $O(2^k)$ clauses. For example, for $k=16$, \Pdr requires almost an hour to find the invariant, and the invariant contains more than 65,000 clauses. In contrast, \PdrEV solves this instance in 10 minutes, finding an invariant with around 5000 clauses. We emphasize that while we could not evaluate this example with \textsc{IC3-INN}, the invariant cannot be expressed in terms of internal signals, and hence we conjecture that \textsc{IC3-INN} should perform similarly to \Pdr.

\section{Conclusion}

We presented \PdrEV, the \emph{first} model checking algorithm that efficiently uses Extended Resolution as its underlying proof-system.
\PdrEV generalizes \Pdr and includes many algorithmic enhancements that enable an efficient integration of ER in model checking. Due to the use of ER, \PdrEV produces shorter proofs for the majority of instances from HWMCC'19/20/24. In addition, and most importantly, the use of ER in \PdrEV does not lead to performance degradation in the general case and outperforms \Pdr in most instances we evaluated. Moreover, it admits short proofs for problems for which \Pdr can only admit proofs of exponential size. We strongly believe that \PdrEV demonstrates that strong proof systems can be used efficiently in model checking.


\bibliographystyle{acm}
\bibliography{bibliography}


\end{document}